\newtheorem{theorem}{Theorem}
\newtheorem{proof}{Proof}
\begin{document}

% Use the \preprint command to place your local institutional report
% number in the upper righthand corner of the title page in preprint mode.
% Multiple \preprint commands are allowed.
% Use the 'preprintnumbers' class option to override journal defaults
% to display numbers if necessary
%\preprint{}

%Title of paper
\title{Impact of the infectious period on epidemics}

% repeat the \author .. \affiliation  etc. as needed
% \email, \thanks, \homepage, \altaffiliation all apply to the current
% author. Explanatory text should go in the []'s, actual e-mail
% address or url should go in the {}'s for \email and \homepage.
% Please use the appropriate macro foreach each type of information

% \affiliation command applies to all authors since the last
% \affiliation command. The \affiliation command should follow the
% other information
% \affiliation can be followed by \email, \homepage, \thanks as well.
\author{Robert R. Wilkinson}
\email[]{R.R.Wilkinson@ljmu.ac.uk}
\affiliation{Department of Applied Mathematics, Liverpool John Moores University, Byrom Street, Liverpool L3 5UX, UK}
\affiliation{Department of Mathematical Sciences, The University of Liverpool, Peach Street, Liverpool L69 7ZL, UK}

\author{Kieran J. Sharkey}

\affiliation{Department of Mathematical Sciences, The University of Liverpool, Peach Street, Liverpool L69 7ZL, UK}
%\email[]{Your e-mail address}
%\homepage[]{Your web page}
%\thanks{}
%\altaffiliation{}
%\affiliation{}

%Collaboration name if desired (requires use of superscriptaddress
%option in \documentclass). \noaffiliation is required (may also be
%used with the \author command).
%\collaboration can be followed by \email, \homepage, \thanks as well.
%\collaboration{}
%\noaffiliation

%\date{\today}

\begin{abstract}
	The duration of the infectious period is a crucial determinant of the ability of an infectious disease to spread. We consider an epidemic model that is network-based and non-Markovian, containing classic Kermack-McKendrick, pairwise, message passing and spatial models as special cases. For this model, we prove a monotonic relationship between the variability of the infectious period (with fixed mean) and the probability that the infection will reach any given subset of the population by any given time. For certain families of distributions, this result implies that epidemic severity is decreasing with respect to the variance of the infectious period. The striking importance of this relationship is demonstrated numerically. We then prove, with a fixed basic reproductive ratio ($R_0$), a monotonic relationship between the variability of the posterior transmission probability (which is a function of the infectious period) and the probability that the infection will reach any given subset of the population by any given time. Thus again, even when $R_0$ is fixed, variability of the infectious period tends to dampen the epidemic. Numerical results illustrate this but indicate the relationship is weaker. We then show how our results apply to message passing, pairwise, and Kermack-McKendrick epidemic models, even when they are not exactly consistent with the stochastic dynamics. For Poisson contact processes, and arbitrarily distributed infectious periods, we demonstrate how systems of delay differential equations (DDEs) and ordinary differential equations (ODEs) can provide upper and lower bounds respectively for the probability that any given individual has been infected by any given time. 
\end{abstract}

% insert suggested PACS numbers in braces on next line
\pacs{}
% insert suggested keywords - APS authors don't need to do this
%\keywords{}

%\maketitle must follow title, authors, abstract, \pacs, and \keywords
\maketitle

% body of paper here - Use proper section commands
% References should be done using the \cite, \ref, and \label commands
\section{Introduction}
\label{intro}
In a homogeneously mixing large population, under certain common assumptions, the epidemiological quantity $R_0$ (this being the expected number of secondary cases per typical primary case near the start of an epidemic) depends on the infectious period only through its mean \cite{Britt}. However, under the same assumptions, other important quantifiers such as the probability of a major outbreak, the final size, and the initial growth rate can depend on the variability of the infectious period; higher variability tending to decrease these quantities \cite{Britt, Lef}. When accounting for the more realistic scenario where individuals can only make direct contacts to their neighbours in a contact network \cite{Newman}, $R_0$ typically depends on the variability of the infectious period and, even when $R_0$ is held fixed, the probability that any given individual will eventually get infected is still dependent on the variability of the infectious period \cite{Kuul}. Here we extend these results to a much more general epidemic model and consider the effect of the infectious period distribution on the probability $P(\mathcal{A},t)$ that the disease will spread to an arbitrary subset $\mathcal{A}$ of the population by an arbitrary time $t$. This probability underpins the likelihood of an epidemic, and the speed and extent of its propagation.  

It is commonplace to assume that the infectious period is exponentially distributed because this leads to greater mathematical tractability. In choosing the parameter for this distribution, the modeller may try to replicate the estimated average infectious period or the estimated value for $R_0$. In any case, the exponential distribution is typically not very realistic for this variable. For example, it has been suggested that gamma, Weibull and degenerate (non-random) distributions may be more realistic for diseases such as smallpox, ebola and measles \cite{Lloyd, Eich, Bailey, Chow}. Thus, investigating the effect of the infectious period distribution is important for obtaining a qualitative understanding of the ability of different diseases to propagate, and of the effects of intervention strategies which may modify this distribution. It is also important for informing parameter choices in epidemic models.

The Susceptible-Exposed-Infectious-Recovered (SEIR) compartmental model for the spread of infectious diseases may be considered in a general stochastic and network-based form (see, for example, \cite{Don} and \cite{Karrer}). Here we consider a similar stochastic epidemic model which we construct as a non-Markovian stochastic process taking place on an arbitrary static contact network (or graph). We allow arbitrarily distributed exposed and infectious periods, heterogeneous contact processes between individuals, and heterogeneity in susceptibility and infectiousness. Many previously studied models such as Kermack-McKendrick \cite{Kerm}, pairwise \cite{Keeling,Wilkinson2}, message passing \cite{Karrer} and spatial models \cite{Mol,Kuul} are identical to, consistent with, or approximations of, special cases of the stochastic model which we examine here \cite{Wilkinson}. We show how our conclusions apply to these well-known models.

Let $X_1$ and $X_2$ be two real-valued random variables. If $E[\psi(X_1)] \ge E[\psi(X_2)]$ for all convex functions $\psi: \mathbb{R} \to \mathbb{R}$ then we say that $X_1$ is greater than $X_2$ in convex order \cite{Sha} and write $X_1 \ge_{\text{cx}} X_2$. The convex order, which provides a type of variability ordering for random variables with the same mean, is central to the work that we present here. Our main result shows that, under mild assumptions, by changing the infectious period distributions such that they decrease in convex order, which necessarily decreases their variance, we can only increase $P(\mathcal{A},t)$. We discuss some important corollaries of this and then present examples and a numerical illustration (Fig.~\ref{meanrand}). 

The strength of the relationship between epidemic severity and the variability of the infectious period may depend on many factors, such as the topology of the contact network and the processes by which individuals interact. However, this is not of primary concern here since we note that epidemic severity may be made arbitrarily small by increasing the variance of the infectious period, regardless of these other factors. This is the case since we may define the infectious period, with specified mean, to be able to take only the value zero or some arbitrarily large number. Thus, the probability that the infectious period is zero may be set arbitrarily close to 1.

The most relevant previous work \cite{Kuul} compares two Susceptible-Infected-Recovered (SIR) network-based epidemic models, where the infectious period is random in one and non-random in the other, and where the `transmission probability' that an individual, given that it gets infected, will contact a given neighbour before recovering is the same in both models. It was shown that, under stronger assumptions than here, the long-term probabilities $\lim_{t \to \infty}P(\mathcal{A},t)$ are not lesser in the model with the non-random infectious period. To relate more directly to this result, we define (following \cite{Millnew} and \cite{Neri}) the `transmissibility' to be the posterior probability that an infected individual, with a given infectious period, will make a contact to a given neighbour before recovering. Thus, the transmissibility is a random variable since it is a function of the infectious period, and its expected value is the transmission probability. We show that by changing the infectious period distribution such that the transmissibility is decreased in convex order, which we shall argue keeps $R_0$ constant, we can only increase $P(\mathcal{A},t)$. We discuss some important corollaries of this and then present an example and a numerical illustration (Fig.~\ref{transrand}).

Finally, we show how our results `carry over' to well-known message passing, pairwise and Kermack-McKendrick models. 

\section{The stochastic model}

The SEIR epidemic model under consideration is defined as follows: Let $G=(\mathcal{V},\mathcal{E})$ be an arbitrary simple undirected graph, where $\mathcal{V}$ is a finite or countably infinite set of vertices (individuals) and $\mathcal{E}$ is a set of undirected edges between the vertices. For $i \in \mathcal{V}$, let $\mathcal{N}_i=\{ j \in \mathcal{V} : (i,j) \in \mathcal{E} \}$ be the set of neighbours of $i$ and let $|\mathcal{N}_i|< \infty$ for all $i \in \mathcal{V}$ (the graph is thus described as `locally finite'). We assume that two individuals are neighbours if and only if at least one can make direct contacts to the other. Let $\nu_i \in [0, \infty]$ denote the time period that $i$ spends in the exposed state; $\mu_i \in [0,\infty )$ is $i$'s infectious period, i.e. the time period that $i$ spends in the infectious state; $\omega_{ji} \in [0, \infty]$ is the time elapsing between $i$ first entering the infectious state and it making a sufficient (for transmission) contact to $j$ (note that the sufficient contact is not infectious, i.e. cannot cause infection, if it occurs after $i$'s infectious period has terminated); $W^i_{\text{out}}$ is some variable on which all of the sufficient contact times $\omega_{ji}(j \in \mathcal{N}_i)$ may depend, e.g. a quantifier of $i$'s infectiousness arising from sources other than the length of its infectious period (similar to $\mathcal{I}_i$ in \cite{Millnew}); $W^i_{\text{in}}$ is some variable on which all of the sufficient contact times $\omega_{ij}(j \in \mathcal{N}_i)$ may depend, e.g. a quantifier of $i$'s susceptibility (similar to $\mathcal{S}_i$ in \cite{Millnew}). For $t \in [0, \infty)$, $i$ makes an \emph{infectious contact} to $j$ at time $t$ if and only if (i) $i$ enters the infectious state at some time $s \le t$, (ii) $\omega_{ji}=t-s$, and (iii) $\omega_{ji} \le \mu_i$. Susceptible individuals enter the exposed state as soon as they receive an infectious contact, exposed individuals immediately enter the infectious state when their exposed period terminates, and infectious individuals immediately enter the recovered state when their infectious period terminates. Individuals may be in any state at $t=0$ except the exposed state, and we may interpret being in the recovered state at $t=0$ as being vaccinated.

Letting $\mathcal{X}=\cup_{i \in \mathcal{V}} \{ \nu_i, \mu_i, W^i_{\text{in}}, W^i_{ \text{out}}, \omega_{ji}(j \in \mathcal{N}_i)\}$, the situation which we wish to consider is where $\mathcal{X}$ and the initial conditions are random. We will assume that, excluding the $\omega$ variables, $\mathcal{X}$ is mutually independent; for all $i \in \mathcal{V}$ and $j \in \mathcal{N}_i$, $\omega_{ji}$ is conditionally independent from $\mathcal{X} \setminus \{ \omega_{ji} \}$ given $W^i_{\text{out}}$ and $W^j_{\text{in}}$; and the initial state of the population is independent from $\mathcal{X}$.

In line with the discussion in Section \ref{intro}, we define $P(\mathcal{A},t)$ to be the probability that at least one member of $\mathcal{A} \subset \mathcal{V}$ is initially infectious, or is initially susceptible and receives an infectious contact before or at time $t>0$. Thus, we say that $P(\mathcal{A},t)$ is the probability that the disease spreads to $\mathcal{A}$ by time $t$.

For ease of reference, the definitions of all of the above variables, and other important definitions, are collected and presented as a list at the start of the appendix.

\section{The impact of the infectious period distribution}
\label{mean}

To understand the impact of the infectious period on the likelihood, speed and extent of epidemic spread, we will first focus on a single individual $i \in \mathcal{V}$ and label a subset $\mathcal{B} \subset \mathcal{N}_i$ of its neighbours using a bijection to $\{1,2, \ldots , |\mathcal{B}|\}$. Assume that $i$ gets infected and consider its behaviour after it leaves the exposed state and immediately enters the infectious state, and also assume that all of the variables except $i$'s infectious period $\mu_i$ and the sufficient contact times $\omega_{ji}(j \in \mathcal{B})$ have already been drawn from their joint distribution. Let $i_{x_1 \ldots x_{|\mathcal{B}|}}^\nrightarrow$ denote the event that $i$ does not make an infectious contact to neighbour $1$ within time period $[0,x_1]$ (since entering the infectious state), neighbour $2$ within time period $[0,x_2]$, $\ldots$ , and neighbour $|\mathcal{B}|$ within time period $[0,x_{|\mathcal{B}|}]$, where the $x_j$ are arbitrary non-negative numbers. We may now write
\begin{eqnarray} \label{first} \nonumber
P^*(i_{x_1 \ldots x_{|\mathcal{B}|}}^\nrightarrow) &=& P(\omega_{1i}> \text{min}\{x_1, \mu_i\},
\omega_{2i}> \text{min}\{x_2, \mu_i\} ,\\ \nonumber
&& \qquad \ldots , \omega_{|\mathcal{B}|i}>\text{min}\{x_{|\mathcal{B}|}, \mu_i \}) \\ 
&=& E[ \phi (\mu_i) ]   ,   \label{phi1}  \end{eqnarray}
where  
\begin{equation} \nonumber \phi(\tau)= \prod_{j =1}^{|\mathcal{B}|} \phi_{j}(\tau) \qquad (\tau \in [0, \infty)), \end{equation}
and,
\begin{equation} \label{omegas} \phi_j (\tau)=   P^*( \omega_{ji} > \text{min}\{ x_j ,\tau \} ) .  \end{equation} \newline
 We use $P^*$ to indicate that we are conditioning on the values already drawn for the infectiousness and susceptibility variables $W^i_{\text{out}}$ and $W^j_{\text{in}}(j \in \mathcal{B})$, since the sufficient contact times $\omega_{ji}(j \in \mathcal{B})$ may depend on these. The form of \eqref{omegas} may be understood by observing that if the infectious period $\mu_i$ takes the value $\tau$ and $\tau \ge x_j$ then any sufficient contact made from $i$ to $j$ within time period $[0,x_j]$ is an infectious contact. Thus, for no infectious contact to $j$ within time period $[0,x_j]$ we need $\omega_{ji}$ to be greater than $x_j$. On the other hand, if $\tau < x_j$ then the only sufficient contacts made within time period $[0, x_j]$ which are infectious are those made in the smaller time period $[0, \tau]$. Here then, for no infectious contact to $j$ within time period $[0,x_j]$ we only need $\omega_{ji}$ to be greater than $\tau (< x_j)$.
 
% 
% $\tau < x_j$, then for no infectious contact to $j$ within time period $[0,x_j]$ we only need the sufficient contact time $\omega_{ji}$ to be greater than $\tau$, since then $i$ recovers before a sufficient contact to $j$ is made. However, if $\tau \ge x_j$ then $i$'s first sufficient contact to $j$ is an infectious contact and so we need $\omega_{ji}$ to be greater than $x_j$.  

 Let us now consider the conditions under which $\phi(\tau)$ is convex since this will be necessary for a precise statement of our results. It is convex if $\phi_j(\tau)$ is convex for all $j \in \mathcal{B}$, since the $\phi_j(\tau)$ are non-negative and non-increasing. Further, $\phi_j(\tau)$ is convex if the survival function for $\omega_{ji}$, after conditioning on any possible values for $W^i_{\text{out}}$ and $ W^j_{\text{in}}$, is convex; note that a non-increasing probability density function (PDF) is sufficient for a convex survival function. If contact processes are independent Poisson processes, which is a common assumption, then the $\omega_{ji}$ are exponential and thus have convex survival functions. If the $\omega_{ji}$ are independent and gamma distributed with shape parameters less than or equal to 1 then their survival functions will be convex. We also note that the survival function for the heavy-tailed Lomax distribution $f(x)=(1+x/ \lambda)^{- \alpha}$, where $\lambda, \alpha > 0$, is convex on $[0, \infty)$. Moreover, since $f(x)=x^{- \alpha}$, where $\alpha>0$, is convex on $(0, \infty)$ then sufficient contact times $\omega_{ji}$ which have other heavy-tailed distributions may have convex survival functions. This is of relevance since it has been shown how processes which depend on human decision-making may develop inter-event times which have heavy-tailed distributions, and data for some such processes do indeed indicate heavy tails \cite{Bara}. Alternatively, if $\omega_{ji}$ is the residual waiting time of a renewal process which governs the times at which $i$ makes sufficient contacts to $j$ then it follows that the PDF for $\omega_{ji}$ is non-increasing. See section 2.2 in \cite{KarsaiKaski}.

 Two important examples where $\phi(\tau)$ is certainly convex are as follows. In both cases, the infectiousness and susceptibility variables $W^i_{\text{out}},W^j_{\text{in}}(j \in \mathcal{B})$ take values in $(0,1]$ and, for all $j \in \mathcal{B}$, we have individual $i$, while infectious, making contacts to $j$ according to an independent Poisson process of rate $\beta_{ji}>0$ (a time-inhomogeneous Poisson process could be used instead but the rate would need to be non-increasing). In the first case any given contact from $i$ to $j \in \mathcal{B}$ is sufficient with probability $W^i_{\text{out}} W^j_{\text{in}}$ while in the second case only the first contact may be sufficient and it is so with probability $W^i_{\text{out}} W^j_{\text{in}}$. Such scenarios have previously been considered and proposed for modelling the spread of HIV \cite{Meester, Knolle, Watts}.

 Having discussed that the convexity of $\phi(\tau)$ is realistic, and follows from many common assumptions, we will assume this in what follows and use it to prove results concerning the effect of the infectious period distribution on the ability of the disease to spread.

Recall that for two real-valued random variables, $X_1$ and $X_2$, If $E[\psi(X_1)] \ge E[\psi(X_2)]$ for all convex functions $\psi: \mathbb{R} \to \mathbb{R}$ then we say that $X_1$ is greater than $X_2$ in convex order and write $X_1 \ge_{\text{cx}} X_2$. An important result for the convex order is that
\begin{eqnarray} \label{implications}  \nonumber
X_1 \ge_{\text{cx}} X_2 & \text{ implies }& E[X_1]=E[X_2], \text{Var}(X_1) \ge \text{Var}(X_2). \\
\end{eqnarray}
Another useful result is that if $E[X_1] = E[X_2]$, and $F_{X_1}$ and $F_{X_2}$ cross exactly once (where these are the cumulative distribution functions (CDFs) for $X_1$ and $X_2$), and the sign sequence of $F_{X_2}- F_{X_1}$ is $-,+$, then this implies that $X_1 \ge_{\text{cx}} X_2$ \cite{Sha}. We will refer to this as the graphical sufficient condition for the order.   

Thus, since $\phi(\tau)$ is convex and non-increasing, then decreasing $i$'s infectious period $\mu_i$ in convex order, or increasing $\mu_i$ in the usual stochastic order, can only decrease $P^*(i_{x_1 \ldots x_{|\mathcal{B}|}}^\nrightarrow)$ because the expectation in \eqref{first} can only decrease. (If $X_1$ and $X_2$ are two real-valued random variables then $X_1$ is less than $X_2$ in the usual stochastic order, and we write $X_1 \le_{\text{st}} X_2$, if and only if $E[g(X_1)] \ge E[g(X_2)]$ for all non-increasing functions $g:\mathbb{R}\to \mathbb{R}$.) This means that, since the $x_j$ are arbitrary non-negative numbers and $\mathcal{B}$ is an arbitrary subset of $i$'s neighbours, the transmission probability that $i$ will make an infectious contact to $j \in \mathcal{N}_i$, given that $i$ gets infected, can only increase. By assuming that $R_0$ is non-decreasing with respect to these transmission probabilities, it follows that $R_0$ can only increase. 

More importantly, for all subsets of individuals $\mathcal{A} \subset \mathcal{V}$ and all $t > 0$, the probability $P(\mathcal{A},t)$, that the disease will spread to $\mathcal{A}$ by time $t$, can only increase. To understand this, note that if we have already drawn all of the variables except $i$'s infectious period $\mu_i$ and the sufficient contact times $\omega_{ji}(j \in \mathcal{B})$, then either it is already known whether or not the disease reaches subset $\mathcal{A}$ by time $t$, or there exists some choice of $\mathcal{B}$ and the $x_j$ such that this occurs if and only if $i_{x_1 \ldots x_{|\mathcal{B}|}}^\nrightarrow$ does not occur; and, as we have shown, the probability of $i_{x_1 \ldots x_{|\mathcal{B}|}}^\nrightarrow$ can only decrease. As a simple example of this, consider the case where the population consists of $i$ and two other individuals, $j$ and $k$, connected in a line, i.e. $j$ is a neighbour of both $i$ and $k$, but $i$ and $k$ are not neighbours. Let $\mathcal{A}$ consist of the single individual $k$ and let $i$ be the only initially infected individual with the others being initially susceptible. Here, if $\omega_{kj}>t$ then it is already known that the disease does not reach $\mathcal{A}(=\{k\})$ by time $t$ no matter what values are drawn for $\mu_i$ and $\omega_{ji}$. However, if $\omega_{kj} \le t$ then the disease does not reach $\mathcal{A}$ by time $t$ if and only if $i_{x_1 \ldots x_{|\mathcal{B}|}}^\nrightarrow$ occurs where $\mathcal{B}=\{j\}$ and $x_1=t-\omega_{kj}$, i.e. if and only if $i$ does not make any infectious contacts to $j$ within time period $[0,t-\omega_{kj}]$.

Since $i$ is an arbitrary member of $\mathcal{V}$ and all of the infectious period distributions are arbitrary, we can repeatedly apply this argument to conclude that $P(\mathcal{A},t)$ can only increase if any subset of the infectious periods are decreased in convex order or increased in the usual stochastic order (see Theorem~\ref{theorem1} in the appendix). 

Let us now consider what this suggests more generally about the importance of the shape of the infectious period distributions. Firstly, for given means, the infectious period distributions which maximise $P(\mathcal{A},t)$ are degenerate, i.e. the infectious periods are non-random. This follows from the graphical sufficient condition for the convex order which shows that any other infectious periods with the same means are necessarily greater in convex order. Secondly, for given means and given maximum values, i.e. bounded infectious periods, the infectious periods which minimise $P(\mathcal{A},t)$ are such that they are either equal to zero or to their maximum values (their variance is maximal). Again, this follows similarly from the graphical sufficient condition for the convex order. Thus, the tendency of decreasing the variances of the infectious periods to increase the probability that the disease will spread to a given part of the network by a given time is made clear. This tendency is also highlighted by \eqref{implications}.

\begin{figure*}[] %fig2
\includegraphics[width=1\textwidth]{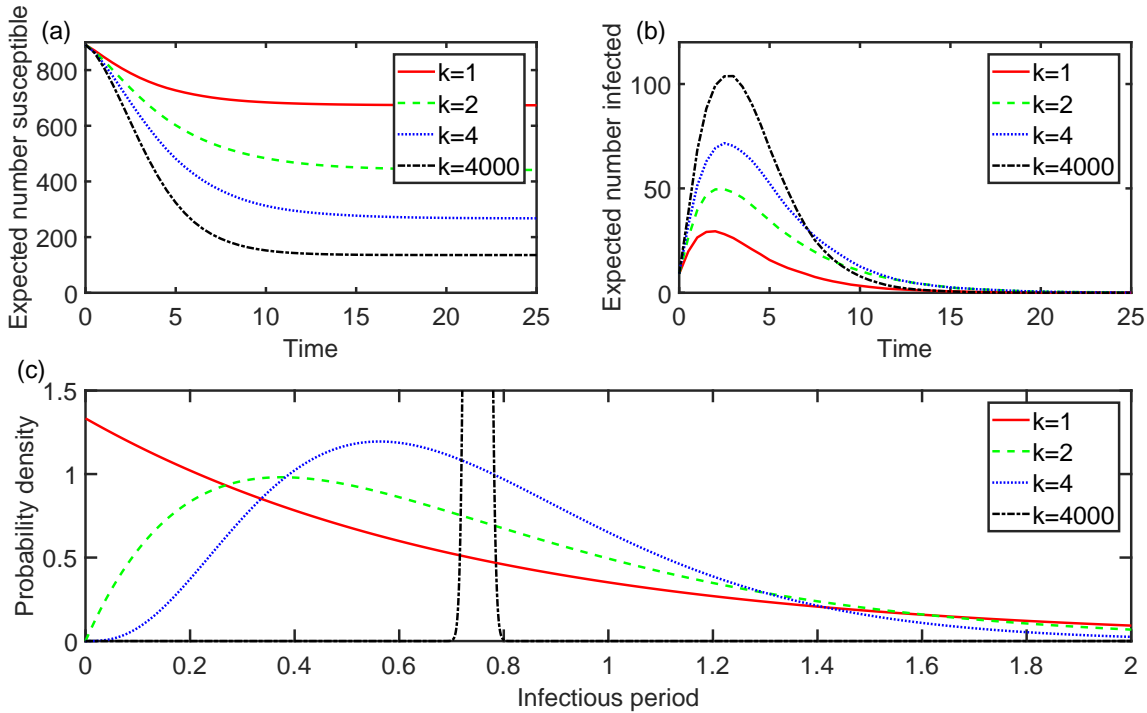}
	\caption{We consider a special case of the stochastic model where the graph is a square lattice of 900 individuals and $\mathcal{X}$ is mutually independent; $\omega_{ji} \sim \text{Exp}(1)$ for all $i \in \mathcal{V}, j \in \mathcal{N}_i$; $\nu_i=0$ for all $i \in \mathcal{V}$; $\mu_i \sim \Gamma (k, 3/4k)$ for all $i \in \mathcal{V}$ ($\Gamma(k, 3/4k)$ is the gamma distribution with shape parameter $k$ and scale parameter $3/4k$); every individual is independently initially infectious with probability 0.01 and initially susceptible otherwise. In (a) we have approximated the expected number susceptible against time for $k=1,2,4,4000$, corresponding to variances of the infectious period of approximately 0.56, 0.28, 0.14, 0.00014, while in (b) we have approximated the expected number infectious against time for $k=1,2,4,4000$. Each approximation was computed as the average of 1000 stochastic simulations. Here, the mean infectious period is the same for all individuals and kept constant at 3/4. In (c) we have plotted the probability density function for the infectious period for each value of $k$.}
	\label{meanrand}
\end{figure*}

Gamma and Weibull distributions are potentially realistic for the infectious periods; they allow concentration about their mean values unlike the exponential distribution. For two gamma distributions with the same mean, we can use the graphical sufficient condition to conclude that the one with greater variance is necessarily greater in convex order; the same applies for two Weibull distributions with the same mean. So if we restrict our distributions to one of theses two families, and keep the means fixed, then decreasing the variances of the infectious periods can only increase $P(\mathcal{A},t)$. An illustration of the extent of this increase, for the case of the gamma distribution, is shown in Fig. 1 where we have computed the expected number susceptible at time $t$ as $\sum_{i \in \mathcal{V}}[1-P(\{i\},t)]$. The effect is remarkable when one considers that the mean is fixed and we have just interpolated between the exponential distribution and the degenerate distribution, both of which are commonly assumed for the infectious period. It also reveals the large amount of error that could be introduced, at all points in time, when approximating the epidemic as a Markov process and using the reciprocal of the estimated average infectious period as the recovery rate in the model.

\section{The impact of the infectious period distribution when transmission probability is fixed}
\label{trans}

We have shown how the transmission probabilities are decreasing with respect to the variability (in the sense of the convex order) of the infectious period, and we assume that $R_0$ is a function of these transmission probabilities. Since it may be sensible to choose an infectious period distribution for our model such that the estimated value of $R_0$ for the disease is replicated, as opposed to the estimated mean of the infectious period, then it is pertinent to consider the sensitivity of $P(\mathcal{A},t)$ to the infectious period distribution when the transmission probabilities and $R_0$ are fixed (recall that $P(\mathcal{A},t)$ is the probability that the disease will spread to $\mathcal{A}\subset \mathcal{V}$ by time $t > 0$). 
%We will assume that $R_0$ may be fixed by keeping the transmission probabilities between all ordered pairs of neighbours constant.

Let us now assume that the sufficient contact times $\omega_{ji}$ are mutually independent, so we discard the infectiousness and susceptibility variables $W^i_{\text{out}}, W^i_{\text{in}} (i \in \mathcal{V})$, and assume that, for each $i \in \mathcal{V}$, the sufficient contact times $\omega_{ji}(j \in \mathcal{N}_i)$ are independent and identically distributed (i.i.d.) (let $\omega_{.i}$ denote the random variable with this distribution). However, some gains here are that we do not make any other assumptions about the distributions of the $\omega$ variables and we allow infectious periods to be infinite with positive probability since we do not specify a finite mean. For $i \in \mathcal{V}$, let $F_{\omega_{.i}}(\tau)$ denote $P(\omega_{ji} \le \tau )$ and let $Z_{i}$ denote the random `transmissibility' variable $F_{\omega_{.i}}(\mu_i)$ (recall that $\mu_i$ is $i$'s infectious period). It is the transmission probability $E[Z_i]$ that we will keep constant.

Again, we will first focus on a single individual $i \in \mathcal{V}$ and label a subset $\mathcal{B} \subset \mathcal{N}_i$ of its neighbours using a bijection to $\{1,2, \ldots , |\mathcal{B}|\}$. Assume that $i$ gets infected and consider its behaviour after it leaves the exposed state and immediately enters the infectious state, and also assume that all of the variables except $i$'s infectious period $\mu_i$ and the sufficient contact times $\omega_{ji}(j \in \mathcal{B})$ have already been drawn from their joint distribution. As previously, we let $i_{x_1 \ldots x_{|\mathcal{B}|}}^\nrightarrow$ denote the event that $i$ does not make an infectious contact to neighbour $1$ within time period $[0,x_1]$ (since entering the infectious state), neighbour $2$ within time period $[0,x_2]$, $\ldots$ , and neighbour $|\mathcal{B}|$ within time period $[0,x_{|\mathcal{B}|}]$, where the $x_j$ are arbitrary non-negative numbers. We may now write
\begin{eqnarray} 
P(i_{x_1 \ldots x_{|\mathcal{B}|}}^\nrightarrow) &=&  E [ \theta (Z_{i})  ]  ,  \label{phi1}  \end{eqnarray}
where
$$ \theta(\tau)= \prod_{j=1}^{|\mathcal{B}|} \theta_j(\tau) \qquad (\tau \in [0,1]),  $$
and,
\begin{equation} \label{Z} \theta_j (\tau)= \text{max} (1- \tau,  P( \omega_{ji} > x_j)). \end{equation}
The form of \eqref{Z} may be understood by observing that $1-Z_i$ is less than or equal to $P(\omega_{ji}>x_j)$ if the infectious period $\mu_i$ takes the value $\tau$ and $\tau \ge x_j$. In this case, any sufficient contact made from $i$ to $j$ within time period $[0,x_j]$ is an infectious contact. Thus, for no infectious contact to $j$ within time period $[0,x_j]$ we need $\omega_{ji}$ to be greater than $x_j$. On the other hand, if $\tau < x_j$ then the only sufficient contacts made within time period $[0, x_j]$ which are infectious are those made in the smaller time period $[0, \tau]$. Here then, for no infectious contact to $j$ within time period $[0,x_j]$ we only need $\omega_{ji}$ to be greater than $\tau (< x_j)$ and this occurs with probability $1-Z_i$.

Note that since $\theta_j(\tau)$ is convex, non-negative and non-increasing for all $j \in \{1,2,\ldots , |\mathcal{B}|\}$, then $\theta (\tau)$ is convex and non-increasing on $[0,1]$. Therefore, decreasing $Z_i$ in convex order, or increasing $Z_i$ in the usual stochastic order, can only cause the expectation in \eqref{phi1} to decrease.

	Thus, altering any subset of the infectious periods such that the corresponding transmissibility variables $Z_i$ are decreased in convex order, or increased in the usual stochastic order, can only cause $P(i_{x_1 \ldots x_{|\mathcal{B}|}}^\nrightarrow)$ to decrease and $P(\mathcal{A},t)$ to increase by the same arguments as in section~\ref{mean} (see Theorem~\ref{theorem1} in the appendix). Using the graphical sufficient condition for the convex order, and keeping $R_0$ constant by keeping the expected values of the $Z_i$ constant, we have that $P(\mathcal{A},t)$ is maximised when the $Z_i$ are non-random. This is the case when the infectious periods are non-random. So, whether the infectious periods are altered such that the means are held constant, or such that $R_0$ is held constant (with the slightly different sets of assumptions), $P(\mathcal{A},t)$ is maximised when the infectious periods are non-random. On the other hand, $P(\mathcal{A},t)$ is minimised when the $Z_i$ can only be equal to either 0 or 1. This is the case when the infectious periods can only be zero or infinite. Thus, as with the infectious periods themselves, there is a clear tendency for decreasing the variances of the transmissibility variables to increase $P(\mathcal{A},t)$. 
	 
	 If the sufficient contact times $\omega_{ji} (i \in \mathcal{V}, j \in \mathcal{N}_i)$ have cumulative distribution functions which are strictly increasing on $[0, \infty)$ then the CDF for $Z_i$ is given by $F_{Z_i}(\tau)=F_{\mu_i}(F^{-1}_{\omega_{.i}}(\tau))$ for all $i \in \mathcal{V}$. In this case, if $i$'s infectious period is altered such that its new CDF crosses its original CDF exactly once and from below, then the new CDF for $Z_i$ crosses the original CDF for $Z_i$ exactly once and from below. We may interpret this alteration as a reduction in the variability of $i$'s infectious period since the CDF becomes less `spread out'. Thus, assuming the transmission probability $E[Z_i]$ is held constant, then $Z_i$ decreases in convex order (by the graphical sufficient condition) and $P(\mathcal{A},t)$ increases. Therefore, when transmission probabilities and $R_0$ are held constant, as opposed to the means of the infectious periods, we see that lesser variability in the infectious period can still lead to greater epidemic severity.

	\begin{figure*}[ht] %fig 4
		\centerline{\includegraphics[width=1\textwidth]{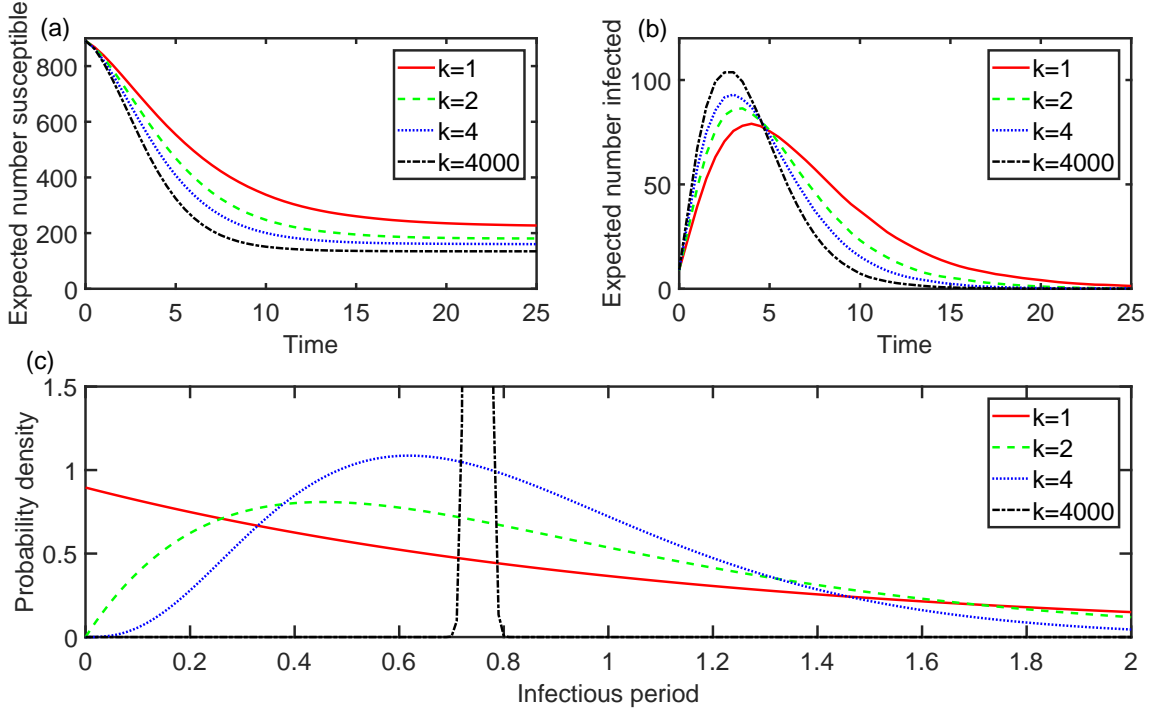}}
		%\vspace{6pc}
		\caption{We consider the same scenario as for Fig. 1 except with $\mu_i \sim \Gamma (k, {\rm e}^{3/4k}-1 )$ for all $i \in \mathcal{V}$, and plot the expected number susceptible (a) and the expected number infectious (b) against time. Here, the transmission probability is the same for all ordered pairs of neighbours and kept constant at $1-{\rm e}^{-3/4} \approx 0.53$, giving $R_0 \approx 3 \times 0.53 = 1.59$. For $k=1,2,4,4000$, the mean of the infectious period is approximately 1.1, 0.91, 0.82, 0.75, with variance 1.2, 0.41, 0.17, 0.00014, respectively. In (c) we have plotted the probability density function for the infectious period for each value of $k$. It is straightforward that the transmissibility variable (which is a function of the infectious period) here decreases in convex order as $k$ increases.}
		\label{transrand}
	\end{figure*}

	In Fig. 2, we demonstrate the extent to which the infectious period distribution can affect $P(\mathcal{A},t)$, when $R_0$ is held constant, by computing the expected number susceptible at time $t$ as $\sum_{i \in \mathcal{V}}[1-P(\{i\},t)]$. The infectious period distribution is here clearly less important than when the means of the infectious periods are held fixed.

	\section{The impact of the infectious period in message passing and pairwise models}
	\label{determ}
	
	There exist message passing and pairwise systems of equations which, in some cases, may be solved in order to exactly capture the probability distribution for the state of any given individual at any given time in the stochastic model \cite{Karrer, Wilkinson, Wilkinson2}. If this is the case then the effect of the infectious period distribution on $P(\{i\},t)$, for all $i \in \mathcal{V}$, is also exactly captured.
	
More generally, epidemic models such as those formed from message passing equations, or moment closure methods, approximate the probability distribution for the state of any given individual at any given time. Here we show that the same conclusions about the impact of the infectious period also apply to these approximate models. To be able to relate to previous work we discard the exposed periods $\nu_i$ and the susceptibility variables $W^i_{\text{in}}$. The message passing system for our stochastic model is defined, for $i \in \mathcal{V}$ and $t \ge 0$,  
	\begin{eqnarray}
	\label{hom3} 
	S^{(i)}_{\text{mes}}(t)&=&z_i \prod_{j \in \mathcal{N}_i} F^{i \leftarrow j}(t), \\  \label{Ipair}
	I^{(i)}_{\text{mes}}(t)&=& 1-S^{(i)}_{\text{mes}}(t)-R^{(i)}_{\text{mes}}(t), \\  \label{Rpair}
	R^{(i)}_{\text{mes}}(t)&=& y_i + \int_0^t f_{\mu_i}(\tau)[1-y_i-S^{(i)}_{\text{mes}}(t-\tau)] \mbox{d} \tau,
	\label{hom}
	\end{eqnarray}
	where, for $i \in \mathcal{V}, j \in \mathcal{N}_i, t \ge 0$,
	\begin{eqnarray} \label{Fgraph} \nonumber
	F^{i \leftarrow j}(t)& =& 1 - \int_0^t f_{\omega_{ij}}(\tau)\bar{F}_{\mu_j}(\tau) \\ \nonumber
	&& \qquad \times  \left[ 1 - y_j -   z_j \prod_{k \in \mathcal{N}_j \setminus i} F^{j \leftarrow k}(t-\tau)    \right] \mbox{d} \tau. \\
	\end{eqnarray}
	Here, $S^{(i)}_{\text{mes}}(t), I^{(i)}_{\text{mes}}(t)$, and $R^{(i)}_{\text{mes}}(t)$, approximate the probability that at time $t$ individual $i$ is susceptible, infectious, and recovered/vaccinated respectively; $F^{i \leftarrow j}(t)$ approximates the probability that at time $t$ individual $i$ (in the cavity state \cite{Karrer}) has not received an infectious contact from individual $j \in \mathcal{N}_i$; $y_i$ and $z_i$ are the probability that individual $i$ is initially recovered/vaccinated and initially susceptible respectively; $f_{\mu_i}$ and $f_{\omega_{ij}}$ are the PDFs for $\mu_i$ and $\omega_{ij}$ respectively; $\bar{F}_{\mu_j}$ is the survival function for $\mu_j$. This system has a unique feasible solution if $\text{sup}_{i \in \mathcal{V}, j \in \mathcal{N}_i}\text{sup}_{\tau \ge 0} f_{\omega_{ij}}(\tau)<\infty$, by Theorem 1 in \cite{Wilkinson}, and gives exactly the same output as a pairwise model which has well-known special cases, by \cite{Wilkinson2} and Theorem 5 in \cite{Wilkinson}. Thus our conclusions about the effect of the infectious period in the above message passing system \eqref{hom3}-\eqref{Fgraph} also apply to pairwise models. 
	
	It can be shown that (see Theorem~\ref{theorem2} in the appendix), similarly to the stochastic model, if any subset of the infectious periods are increased in the usual stochastic order then $S^{(i)}_{\text{mes}}(t)$ can only decrease for all $i \in \mathcal{V}$ and all $t > 0$; if the infectious periods of any subset $\mathcal{B} \subset \mathcal{V}$ are decreased in convex order, and $f_{\omega_{ji}}(\tau)$ is non-increasing for all $i \in \mathcal{B}, j \in \mathcal{N}_i$, then $S^{(i)}_{\text{mes}}(t)$ can only decrease for all $i \in \mathcal{V}$ and all $t > 0$. Now assume that for each $i \in \mathcal{V}$ the $\omega_{ji}(j \in \mathcal{N}_i)$ are i.i.d. (let $\omega_{.i}$ denote the random variable with this distribution) and define the transmissibility variable $Z_i=F_{\omega_{.i}}(\mu_i)$. In this case, if the distributions for any subset of the infectious periods are altered such that the corresponding $Z_i$ are increased in the usual stochastic order, or decreased in convex order, then $S^{(i)}_{\text{mes}}(t)$ can only decrease for all $i \in \mathcal{V}$ and all $t > 0$ (see Theorem~\ref{theorem2}). Using the graphical sufficient condition for the convex order, this means that when the transmission probabilities ($E[Z_i]$ for all $i \in \mathcal{V}$) and $R_0$ are fixed, $S^{(i)}_{\text{mes}}(t)$ is minimised when infectious periods are non-random and maximised when infectious periods may be only zero or infinite. Note that in the former case the CDFs for the infectious periods are Heaviside step functions while in the latter case they are constant on $[0,\infty)$.   
	
	We can now build on these results in order to write down systems of equations which are simpler to solve and which provide rigorous lower and upper bounds, and approximations, for $S^{(i)}_{\text{mes}}(t)$ and $R^{(i)}_{\text{mes}}(t)$ for all $i \in \mathcal{V}$ and all $t>0$. Importantly, if the $\omega$ variables are mutually independent or positively correlated and the states of individuals at $t=0$ are mutually independent, then a lower bound on $S^{(i)}_{\text{mes}}(t)$ is also a lower bound on the probability that $i$ is susceptible at time $t$, and an upper bound on $R^{(i)}_{\text{mes}}(t)$ is also an upper bound on the probability that $i$ is recovered/vaccinated at time $t$. This follows since $S^{(i)}_{\text{mes}}(t)$ is a lower bound on the former probability while $R^{(i)}_{\text{mes}}(t)$ is an upper bound on the latter \cite{Karrer, Wilkinson, Wilkinson2}. Such bounds provide a `worst case scenario' \cite{Karrer} and an upper bound on the expected final size of the epidemic.
	
	To obtain a lower bound and approximation for $S^{(i)}_{\text{mes}}(t)$ we may replace $\bar{F}_{\mu_j}(\tau)$ in \eqref{Fgraph} by $H(s_{ij}-\tau)$ where $H$ is the Heaviside step function and $s_{ij}$ is defined to satisfy
	\begin{eqnarray} \nonumber \label{bound1} \int_0^{\infty} f_{\omega_{ij}}(\tau)\bar{F}_{\mu_j}(\tau) \text{d} \tau&=& \int_0^{\infty} f_{\omega_{ij}}(\tau)H(s_{ij}-\tau) \text{d} \tau \\
	&=& \int_0^{s_{ij}} f_{\omega_{ij}}(\tau) \text{d} \tau.
	 \end{eqnarray} 
	Similarly, to obtain an upper bound and approximation for $S^{(i)}_{\text{mes}}(t)$ we may replace $\bar{F}_{\mu_j}(\tau)$ in \eqref{Fgraph} by a constant $c_{ij}$ which is defined to satisfy
	\begin{eqnarray} \nonumber \label{bound2} \int_0^{\infty} f_{\omega_{ij}}(\tau)\bar{F}_{\mu_j}(\tau) \text{d} \tau&=& \int_0^{\infty} f_{\omega_{ij}}(\tau) c_{ij} \text{d} \tau \\
	&=& c_{ij}. \end{eqnarray} 
	These results are presented as part (d) of Theorem~\ref{theorem2} in the appendix. Note that, in both cases, making these changes to \eqref{Fgraph} does not alter the probability, as it is represented in the message passing system, that a given infected individual will make an infectious contact to a given neighbour before recovering (this is the quantity in \eqref{bound1} and \eqref{bound2}). It is then straightforward, following section IV of \cite{Karrer} and the proof of Theorem 4 in \cite{Wilkinson}, that $\lim_{t \to \infty} S^{(i)}_{\text{mes}}(t)$ is also unaltered for all $i \in \mathcal{V}$. On these grounds, we expect the bounds and approximations to be good. Additionally, replacing $S^{(i)}_{\text{mes}}(t - \tau)$ in \eqref{Rpair} by its lower and upper bound (for all $\tau \in [0,t]$) produces an upper and lower bound, and approximations, respectively for $R^{(i)}_{\text{mes}}(t)$. This follows because the integrand in \eqref{Rpair} is decreasing with respect to $S^{(i)}_{\text{mes}}(t-\tau)$. 
	
	As an example, if contact processes are Poisson such that the $\omega_{ji}$ are exponentially distributed with parameters $\beta_{ji}>0$, we can then conveniently obtain the lower bounds via delay differential equations (DDEs)
		 	\begin{eqnarray} \nonumber \dot{F}_{-}^{i \leftarrow j}(t)&=& -\beta_{ij} \Big( F_{-}^{i \leftarrow j}(t)-y_j-z_j \prod_{k \in \mathcal{N}_j \setminus i} F_{-}^{j \leftarrow k}(t) \\ \nonumber 
		 	&& - H(t-s_{ij}) {\rm e}^{- \beta_{ij} s_{ij}}(1-y_j \\ \nonumber
		 	&&-z_j \prod_{k \in \mathcal{N}_j \setminus i} F_{-}^{j \leftarrow k}(t-s_{ij})) \Big), \\ \nonumber
		 	F_{-}^{i \leftarrow j}(0)&=&1,
		 	\end{eqnarray}
		 	and the upper bounds via ordinary differential equations (ODEs)
	\begin{eqnarray} \nonumber \dot{F}_{+}^{i \leftarrow j}(t)&=& \beta_{ij} (1- F_{+}^{i \leftarrow j}(t) ) \\ \nonumber
&&-\beta_{ij} c_{ij} (1-y_j-z_j \prod_{k \in \mathcal{N}_j \setminus i} F_{+}^{j \leftarrow k}(t)),	\\ \nonumber
F_{+}^{i \leftarrow j}(0)&=&1,
	 \end{eqnarray}
	where we use `dot' notation to indicate derivatives with respect to time. The lower and upper bounds, and approximations, for $S^{(i)}_{\text{mes}}(t)$ are then given by $z_i \prod_{j \in \mathcal{N}_i} F_{-}^{i \leftarrow j}(t)$ and $z_i \prod_{j \in \mathcal{N}_i} F_{+}^{i \leftarrow j}(t)$ respectively.

		\section{The impact of the infectious period in the Kermack-McKendrick model}
	
Similar results also apply to the classic SIR model proposed by Kermack and McKendrick \cite{Kerm}. The model is defined as follows:
\begin{eqnarray} \label{hom11} \nonumber    \dot{S}(t)&=&  S(t) \left[    \int_0^t  h(\tau)\bar{F}_{\mu}(\tau) \dot{S}(t- \tau) \mbox{d} \tau - I(0)h(t)\bar{F}_{\mu}(t) \right],  \\  && \\       I(t)&=& 1-S(t)-R(t), \\  \label{Rkerm}
R(t)&=& R(0)+ \int_0^t f_{\mu}(\tau)[1-R(0)-S(t-\tau)] \mbox{d} \tau,
\label{hom2}
\end{eqnarray}
where the variables on the left-hand-side represent the fraction susceptible, infected and recovered respectively at time $t \ge 0$; $h(\tau)$ is the rate at which an individual, that has been infected for time period $\tau$, makes sufficient contacts to others; and $\mu$ is the random infectious period with density function $f_{\mu}$ and survival function $\bar{F}_{\mu}$. Let $Z^*= \int_0^{\mu}h(\tau) \text{d} \tau$ be the accumulated infectivity \cite{Britt}, such that $E[Z^*](=R_0)$ is the expected number of infectious contacts that an infected individual will make before recovering. Thus $Z^*$ plays a similar role to the transmissibility random variable. Equation~\eqref{hom11} can be derived from equation 13 in \cite{Kerm} by dividing the latter through by the total population size, and after appropriately renaming the variables and functions.

Let $S_1(t)$ be given by~\eqref{hom11} but with the infectious period $\mu$ replaced by $\mu_1$. Let $S_2(t)$ be given by \eqref{hom11} but with $\mu$ replaced by $\mu_2$. Let $h(\tau)$ be continuously differentiable and assume at least one of the following conditions:
\begin{enumerate}
	\item[(i)]  $\mu_1 \le_{\text{st}} \mu_2$ and the infectious period cannot be infinite;
	\item[(ii)]  $ \mu_1 \ge_{\text{cx}} \mu_2 $ and $h(\tau)$ is non-increasing on $[0, \infty)$ and the infectious period cannot be infinite;
	\item[(iii)] $Z^*_1 \le_{\text{st}} Z^*_2$ (defined using $\mu_1$ and $\mu_2$ respectively);
	\item[(iv)] $ Z^*_1 \ge_{\text{cx}} Z^*_2 $.
\end{enumerate} 
Then for all $t \ge 0$, we have $S_1(t) \ge S_2(t)$. See Theorem~\ref{theorem3} in the appendix. Note that if individuals are assumed to make contacts according to a homogeneous Poisson process then $h(\tau)$ is constant and therefore non-increasing and continuously differentiable. It is also worth noting that by replacing the infectious period in the Kermack-McKendrick model by one which is non-random, keeping $R_0$ or the expected infectious period constant, a lower bound $S^{-}(t)$ on $S(t)$ is achieved for all $t \ge 0$ (using the graphical sufficient condition for the convex order); replacing $S(t-\tau)$ in \eqref{Rkerm} by $S^{-}(t- \tau)$ produces an upper bound on $R(t)$ (since the integrand in \eqref{Rkerm} is decreasing with respect to $S(t-\tau)$). For example, if $h(t)= \beta > 0$ then we may obtain $S^{-}(t)$ by solving 
\begin{eqnarray} \nonumber
	\dot{S}^{-}(t)&=& - \beta S^{-}(t)I^*(t), \\ \nonumber
	\dot{I}^*(t) &=&  \beta S^{-}(t)I^*(t) + \dot{S}^{-}(t-E[\mu]),
	\end{eqnarray}
	with $S^{-}(t)=S(0)$ for all $t \in [-E[\mu],0]$ and $I^*(0)=I(0)$. In this case $R_0= \beta E[\mu]$ and so the expected infectious period and $R_0$ are simultaneously kept constant. It is then straightforward that $\lim_{t \to \infty}S(t)=\lim_{t \to \infty}S^{-}(t)$ since this quantity is determined by $R_0$ and the initial conditions \cite{Kerm}. On these grounds we expect the bound to be good.

	\section{Conclusion}

For an extremely general epidemic model, we have proved a monotonic relationship between the variability of the infectious period and the severity of an epidemic. Specifically, the probability $P(\{i\},t)$ that an arbitrary individual $i$ will get infected by time $t>0$ is decreasing with respect to the variability of the infectious period with fixed mean (using the convex order as a variability order). Similarly, and more intuitively, $P(\{i\},t)$ is increasing with respect to the magnitude of the infectious period (using the usual stochastic order as a magnitude order). Since the expected number to get infected by time $t$ is obtained by summing $P(\{i\},t)$ over all individuals, this quantity is also decreasing with respect to the variability of the infectious period and increasing with respect to the magnitude of the infectious period.  

Using a graphical sufficient condition for the convex order, we have shown that for an infectious period with fixed mean, $P(\{i\},t)$ is maximised if the infectious period distribution is degenerate (non-random). For an infectious period with fixed mean and fixed maximum value, $P(\{i\},t)$ is minimised when the infectious period can only take its maximum value or zero. These results also apply to the expected number to get infected by time $t$.

We have also shown that when $R_0$ (the basic reproductive ratio) is fixed, $P(\{i\},t)$ is decreasing with respect to the variability of the posterior transmission probability, which is a function of the infectious period. It follows that when $R_0$ is fixed, $P(\{i\},t)$ is maximised if the infectious period is non-random and minimised if it can only be either infinite or zero. These results also apply to the expected number to get infected by time $t$.

Our main results were found to `carry over', in an obvious sense, to message passing and pairwise models. For the message passing model, we also showed that by changing the cumulative distribution functions of the infectious periods to more tractable Heaviside step functions or constants, while keeping $R_0$ fixed, lower and upper bounds respectively may be obtained for the expected number susceptible at time $t$ in the message passing model. We showed that, if contact processes are Poisson, the lower and upper bounds may be obtained via delay differential equations (DDEs) and ordinary differential equations (ODEs) respectively. 

For the classic SIR model of Kermack and McKendrick \cite{Kerm} we were able to show that the fraction susceptible at time $t>0$ is increasing with respect to the variability of the infectious period with fixed mean, assuming that the rate at which an infected individual makes contacts to others is non-increasing with time. Additionally, by making the infectious period non-random (which changes its CDF to a Heaviside step function), keeping its mean or $R_0$ constant, a lower bound on the fraction susceptible is obtained for all time points (an upper bound on the model's epidemic final size is thus also obtained). We showed that, if contact processes are Poisson, the lower bound may be obtained via a system of one ODE and one DDE.

Our numerical results illustrate that, even under common parametrisations, the severity of the stochastic epidemic is highly sensitive to the infectious period distribution when its mean is fixed, but less so when $R_0$ is fixed. This suggests that we should base our choice for the infectious period distribution more on the estimated value of $R_0$ than on the estimated average infectious period - at least when computing the timecourse of the expected number susceptible (equivalently, the timecourse of the expected total number of cases). For a given epidemic model, this also suggests the strategy of computing the transmission probability, or $R_0$, first and then using this to inform a new choice for the infectious period distribution which will ease numerical solution or mathematical analysis. However, $R_0$ is much more difficult to measure empirically than the average infectious period. 

This work adds to recent research which has sought to articulate the impact of non-Markovian dynamics in epidemic models \cite{Mieghem, Starnini, Cator, Britt, Sher}. Notably, our results do not depend on the assumption of exponential contact times, the validity of which has recently been questioned since heavy-tailed distributions have been inferred from observation \cite{Mieghem, Bara, Karsai}.

It is unclear whether similar results can be found in compartmental structures, such as Susceptible-Infected-Susceptible (SIS) dynamics, where individuals may be infected multiple times. Indeed, it has recently been shown \cite{Ball} that for a particular stochastic SIS model, in which contact processes are Poisson, the expected total time that the system spends in any given state only depends on the infectious period distribution through its mean.

% If you have acknowledgments, this puts in the proper section head.
\begin{acknowledgments}
 This research was funded by the Leverhulme Trust, Grant No. RPG-2014-341. We thank the reviewers for their constructive comments which have helped us to improve the presentation of our results.
\end{acknowledgments}

% Create the reference section using BibTeX:
%\bibliography{basename of .bib file}

\begin{thebibliography}{00}
	
	\bibitem{Britt} T. Britton and D. Lindenstrand, Epidemic modelling: Aspects where stochasticity matters, Math. Biosci. {\bf 222}, 109--116 (2009).
	
	\bibitem{Lef} C. Lefevre and P. Picard, An unusual stochastic order relation with some applications in sampling and epidemic theory, Adv. Appl. Prob. {\bf 25}, 63--81 (1993).
	

	
	\bibitem{Newman} M.E.J. Newman, Networks: An Introduction, Oxford University Press (2010).
	
	
	\bibitem{Kuul} K. Kuulasmaa, The spatial general epidemic and locally dependent random graphs, J. Appl. Probab. {\bf 19}, 745--758 (1982).
	
	\bibitem{Lloyd} A.L. Lloyd, Realistic distributions of infectious periods in epidemic models: Changing patterns of persistence and dynamics, Theor. Popul. Biol. {\bf 60}, 59--71 (2001).
	
	\bibitem{Eich} M. Eichner and K. Dietz, Transmission potential of smallpox: Estimates based on detailed data from an outbreak, Am. J. Epidemiol. {\bf 158}, 110 (2003).
	
	\bibitem{Bailey} N. Bailey, On estimating the latent and infectious periods of measles: I. Families with two susceptible only, Biometrika {\bf 43}, 15 (1956)
	
	\bibitem{Chow} G. Chowell and H. Nishiura, Transmission dynamics and control of ebola virus disease (EVD): A review, BMC Medicine {\bf 12}, 196 (2014). 
	
	\bibitem{Don} P. Donnelly, The correlation structure of epidemic models, Math. Biosci. {\bf 117}(1-2), 49--75 (1993).
	
	\bibitem{Karrer} B. Karrer, M.E.J. Newman, A message passing approach for general epidemic models, Phys. Rev. E, {\bf 82}, 016101 (2010).
	
	\bibitem{Kerm} W.O. Kermack, A.G. McKendrick, A contribution to the mathematical theory of epidemics, Proc. R. Soc. Lond. A {\bf 115}, 700--721 (1927).
	
	\bibitem{Keeling} M.J. Keeling, The effects of local spatial structure on epidemiological invasions, Proc. Biol. Sci. {\bf 266}(1421), 859 (1999).
	

	

	
	\bibitem{Wilkinson2} R.R. Wilkinson, K.J. Sharkey, Message passing and moment closure for susceptible-infected-recovered epidemics on finite networks, Phys. Rev. E, {\bf 89}, 022808 (2014).
	
		\bibitem{Mol} D. Mollison, Spatial contact models for ecological and epidemic spread, J. Royal Stat. Soc. B {\bf 39}, 283 (1977).
	
	\bibitem{Wilkinson} R.R. Wilkinson, F.G. Ball, K.J. Sharkey, The relationships between message passing, pairwise, Kermack-McKendrick and stochastic SIR epidemic models, J. Math. Biol. {\bf 75}, 1563 (2017).
	

	\bibitem{Sha} M. Shaked, J.G. Shanthikumar, Stochastic orders, New York, Springer, 2007.
	
	\bibitem{Millnew} J.C. Miller, Epidemic size and probability in populations with heterogeneous infectivity and susceptibility, Phys. Rev. E, {\bf 76}, 010101(R) (2007).
	
	\bibitem{Neri} F.M. Neri, F.J.  P\'{e}rez-Reche, S.N. Taraskin, C.A. Gilligan, Heterogeneity in susceptible-infected-removed (SIR) epidemics on lattices, J. R. Soc. Interface, {\bf 8}(55), 201--209 (2011). 

\bibitem{Bara}  A.L. Barab\'{a}si, The origin of bursts and heavy tails in human dynamics, Nature, {\bf 435}(7039), 207--211 (2005).

\bibitem{KarsaiKaski} M. Karsai, H.-H. Jo, and K. Kaski, Bursty human dynamics, Springer International Publishing, 2018.


	\bibitem{Meester} R. Meester, P. Trapman, Bounding basic characteristics of soatial epidemics with a new percolation model, Adv. Appl. Probab. {\bf 43}(2), 335--347 (2011). 
	
	\bibitem{Knolle} H. Knolle, A discrete branching process model for the spread of HIV via steady sexual partnerships, J. Math. Biol. {\bf 48}(48), 423--443 (2004).
	
	\bibitem{Watts} C.H. Watts, R.M. May, The influence of concurrent partnerships on the dynamics of HIV/AIDS, Math. Biosci. {\bf 108}, 89--104 (1992).

%\bibitem{Kenah} E. Kenah, J.M. Robins, Second look at the spread of epidemics on networks. \emph{Phys. Rev. E} {\bf 76}, 036113 (2007).


\bibitem{Mieghem} P. Van Mieghem, R. van de Bovenkamp, Non-Markovian infection spread dramatically alters the susceptible-infected-susceptible epidemic threshold in networks, Phys. Rev. Lett. {\bf 110}, 108701 (2013).
 
\bibitem{Starnini} M. Starnini, J.P. Gleeson, M. Bogu\~{n}\'{a}, Equivalence between non-Markovian and Markovian dynamics in epidemic spreading processes, Phys. Rev. Lett. {\bf 118}, 128301 (2017).

\bibitem{Cator} E. Cator, R. van de Bovenkamp, P. Van Mieghem, Susceptible-infected-susceptible epidemics on networks with general infection and cure times, Phy. Rev. E {\bf 87}, 062816 (2013).

\bibitem{Sher} N. Sherborne, K.B. Blyuss, I.Z. Kiss, Dynamics of multi-stage infections on networks, Bull. Math. Biol. {\bf 77}(10), 1909--1933 (2015).

	\bibitem{Karsai} M. Karsai, K. Kaski, A.L. Barab\'{a}si, J. Ket\'{e}sz, Universal features of correlated bursty behaviour, Scientific Reports {\bf 2}, 397 (2012). 
	
	\bibitem{Ball}  F. Ball, T. Britton, P. Neal, On expected durations of birth-death processes, with applications to branching processes and SIS epidemics. \emph{J. Appl. Probab.} {\bf 53}(1), 203--215 (2016).




\end{thebibliography}

\appendix*
\section{} 
\label{appA}

Note the following definitions.
\begin{description}[leftmargin=0.5cm, style=nextline]
	\item[$G=(\mathcal{V}, \mathcal{E})$]
	A simple undirected graph where $\mathcal{V}$ is a countable set of vertices and $\mathcal{E}$ is a set of undirected edges between the vertices. This graph is to be interpreted as the contact network on which the disease spreads with the vertices representing individuals and edges representing possible transmission routes.  
	\item[$\mathcal{N}_i$]
	The set of neighbours of $i \in \mathcal{V}$. Specifically, $\mathcal{N}_i=\{j \in \mathcal{V}: (i,j) \in \mathcal{E} \}$.
		\item[$\mu_i$]
		The random infectious period of individual $i \in \mathcal{V}$.
	\item[$\omega_{ji}$]
	The random time between $i \in \mathcal{V}$ entering the infectious state and its subsequent sufficient contact to $j$. 
			\item[$Z_i$]
			The transmissibility random variable for $i \in \mathcal{V}$. It is only defined for the case where the $\omega_{ji}(j \in \mathcal{N}_i)$ are independent and identically distributed. In this case we have $Z_i=F_{\omega_{.i}}(\mu_i)$, where $F_{\omega_{.i}}(\tau)=P(\omega_{ji} \le \tau)$ for all $j \in \mathcal{N}_i$.
	\item[$\nu_i$]
	The random exposed period of individual $i \in \mathcal{V}$.
		\item[$W^i_{\text{in}}$]
		A random variable which is a measure of the susceptibility of $i \in \mathcal{V}$, in the sense that the $\omega_{ij}(j \in \mathcal{N}_i)$ may depend on it.
		\item[$W^i_{\text{out}}$]
		A random variable which is a measure of the infectiousness of $i \in \mathcal{V}$, in the sense that the $\omega_{ji}(j \in \mathcal{N}_i)$ may depend on it.
	\item[$\ge_{\text{cx}}$]
	Let $X_1$ and $X_2$ be two real-valued random variables. $X_1$ is greater than $X_2$ in convex order, and we write $X_1 \ge_{\text{cx}} X_2$ or $F_{X_1} \ge_{\text{cx}} F_{X_2}$ (where these are the cumulative distribution functions), if and only if $E[\psi(X_1)] \ge E[\psi(X_2)]$ for all convex functions $\psi: \mathbb{R} \to \mathbb{R}$. 
	\item[$ \ge_{\text{dcx}}$] 	Let $X_1$ and $X_2$ be two real-valued random variables. $X_1$ is greater than $X_2$ in decreasing convex order, and we write $X_1 \ge_{\text{dcx}} X_2$ or $F_{X_1} \ge_{\text{dcx}} F_{X_2}$ (where these are the cumulative distribution functions), if and only if $E[\psi(X_1)] \ge E[\psi(X_2)]$ for all non-increasing convex functions $\psi: \mathbb{R} \to \mathbb{R}$ (see section 4.A.1 in \cite{Sha}). (Note that $X_1 \ge_{\text{cx}} X_2$ implies $X_1 \ge_{\text{dcx}} X_2$.)
	\item[$\le_{\text{st}}$]
	If $X_1$ and $X_2$ are two real-valued random variables then $X_1$ is less than $X_2$ in the usual stochastic order, and we write $X_1 \le_{\text{st}} X_2$ or $F_{X_1} \le_{\text{st}} F_{X_2}$ (where these are the cumulative distribution functions), if and only if $E[g(X_1)] \ge E[g(X_2)]$ for all non-increasing functions $g:\mathbb{R} \to \mathbb{R}$.
	\item[$\stackrel{\bf{d}}{=}$]
	If $X_1$ and $X_2$ are two real-valued random variables then $X_1 \stackrel{d}{=} X_2$, and we say that $X_1$ and $X_2$ are equal in distribution, if and only if their cumulative distribution functions are equal over the whole domain of real numbers. 
	\item[$P(\mathcal{A},t)$] 
	The probability that the infection spreads to $\mathcal{A} \subset \mathcal{V}$ by time $t>0$. Specifically, the probability that at least one member of $\mathcal{A}$ is initially infectious, or is initially susceptible and receives an infectious contact before or at time $t$.
\end{description}

Note that every edge of $G$ may be specified by as an unordered pair of vertices. Let us now replace each edge by two oppositely directed arcs and note that every arc may be specified as an ordered pair of vertices. Let a finite sequence of arcs $\xi= \{a_1, \ldots, a_n\}$ be a finite simple path between vertices $v_0$ and $v_n$ if and only if $a_i$ is an arc between $v_{i-1}$ and $v_i$ for all $i \in \{1, \ldots , n \}$ where $v_i \in \mathcal{V}$ for all $i \in \{0,1, \ldots , n\}$, and $v_i \neq v_j$ for all $i,j \in \{0,1, \ldots , n\}$ where $i \neq j$ and $n \in \mathbb{N}$. All the paths with which we are concerned are finite simple paths and we refer to them as paths for brevity. 

Now, following Donnelly \cite{Don}, let us define the variables
\begin{equation} \nonumber \omega_{ji}^*= \begin{cases} \infty & \mbox{if } \omega_{ji} > \mu_i \\  \omega_{ji} & \mbox{otherwise,} \end{cases} \end{equation}
and associate arc $(i,j)$ with $\omega_{ji}^*$ for all $i \in \mathcal{V}, j \in \mathcal{N}_i$. We define the total weighting $\omega_{\xi}$ of path $\xi$ to be the sum of the $\omega_{ji}^*$ (infectious contact times) associated with its arcs plus the sum of the $\nu_i$ (exposed periods), excepting the first and last, associated with its individuals. We can write:
\begin{equation} \nonumber
\omega_{\xi}=\sum_{i=1}^{n} \omega^*_{v_i, v_{i-1}} + \sum_{i=1}^{n-1} \nu_{v_{i}}.
\end{equation}

Letting $\Xi_I^{\mathcal{A}}$ be the set of paths which have an initially infectious individual at the start and some member of $\mathcal{A} \subset \mathcal{V}$ at the end, and where every individual except the first is initially susceptible, it then follows from the above definitions that $\text{inf}_{\xi \in \Xi_I^{\mathcal{A}}} \{\omega_{\xi} \}$ is the time at which the first infectious contact to an initially-susceptible member of $\mathcal{A}$ occurs. We assume $\text{inf}_{\xi \in \Xi_I^{\mathcal{A}}} \{\omega_{\xi} \}=\infty$ when $\Xi_I^{\mathcal{A}}$ is empty. We can now write

\begin{eqnarray} \label{preproof} \nonumber P(\mathcal{A},t)&=&1-\int_{\mathcal{Z}} \mathbbm{1}_{\mathcal{I}(\mathcal{A})}(z) P(\text{inf}_{\xi \in \Xi_I^{\mathcal{A}}(z)} \{ \omega_{\xi} \}>t  ) \mbox{d} \kappa(z) \\ \end{eqnarray}
where the measure $\kappa(z)$ is the distribution of the initial state of the population and $\mathcal{Z}=\{S,I,R\}^{\mathcal{V}}$ ($S$-susceptible, $I$-infected, $R$-recovered/vaccinated); $\mathcal{I}(\mathcal{A})$ is the subset of $\mathcal{Z}$ which has no members of $\mathcal{A}$ (initially) infected. Note that the probability in the integrand does not need to be conditioned on the initial state of the population because we are assuming that the weightings of paths are independent from the initial state.

\begin{theorem}
	\label{theorem1}
	Let epidemic 1 and epidemic 2 be two parametrisations of the stochastic model. Assume that they are the same except for the distributions of the infectious periods of individuals in $\mathcal{D} \subset \mathcal{V}$. Assume that for all $i \in \mathcal{D}$ at least one of the following conditions holds, where $F^{(m)}_X$ is the cumulative distribution function of random variable $X$ in epidemic $m \in \{1,2\}$.
	\begin{enumerate}
	 \item[(a)] $F_{\mu_i}^{(1)} \le_{\rm{st}} F_{\mu_i}^{(2)}$, and $\mu_i$ cannot be infinite;
	 \item[(b)] $F_{\mu_i}^{(1)} \ge_{\rm{dcx}} F_{\mu_i}^{(2)}$ and $P(\omega_{ji}>\tau \mid W^i_{\text{out}}=w_1, W^j_{\text{in}}=w_2)$ is convex in $\tau$ for all possible $w_1,w_2$ and all $j \in \mathcal{N}_i$, and $\mu_i$ cannot be infinite; 
	 \item[(c)] $F_{Z_i}^{(1)} \le_{\rm{st}} F_{Z_i}^{(2)}$ and the $\omega_{ji}(j \in \mathcal{N}_i)$ are independent and identically distributed (i.i.d.); 
	 \item[(d)] $F_{Z_i}^{(1)} \ge_{\rm{dcx}} F_{Z_i}^{(2)}$ and the $\omega_{ji}(j \in \mathcal{N}_i)$ are i.i.d.
	 \end{enumerate}

Then the probability $P(\mathcal{A},t)$ that the infection spreads to $\mathcal{A}$ by time t is greater, or the same, in epidemic 2 than in epidemic 1 for all $\mathcal{A} \subset \mathcal{V}$ and all $t > 0$.
\end{theorem}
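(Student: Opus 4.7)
The plan is to exploit the path representation \eqref{preproof} together with the single-individual analysis already carried out in Sections~\ref{mean} and~\ref{trans}. The first reduction is to go from changing the distributions of $\{\mu_i:i\in\mathcal{D}\}$ simultaneously to changing one at a time: epidemics~1 and~2 can be joined by a finite chain of intermediate epidemics, each differing from its predecessor only in the distribution of $\mu_i$ for a single $i\in\mathcal{D}$ satisfying the appropriate one of (a)--(d). If each link is monotone in the required direction then so is the composition, so it suffices to prove the theorem for $\mathcal{D}=\{i\}$.

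For the single-individual case I would use \eqref{preproof} and show that, for each fixed initial state $z\in\mathcal{I}(\mathcal{A})$, the probability $P(\inf_{\xi\in\Xi_I^{\mathcal{A}}(z)}\omega_{\xi}>t)$ is weakly larger in epidemic~1 than in epidemic~2. Couple the two epidemics so that all variables other than $\mu_i$ are shared, and condition on every variable except $\mu_i$ and the outgoing times $\omega_{ji}(j\in\mathcal{N}_i)$ -- and, under hypothesis (b), additionally on $W^i_{\text{out}}$ and on all $W^j_{\text{in}}$ with $j\in\mathcal{N}_i$, so that the conditional survival functions required by (b) are convex. Under this conditioning let $T_i\in[0,\infty]$ be the (now deterministic) time at which $i$ enters the infectious state, and for each $j\in\mathcal{N}_i$ let $d_j\in[0,\infty]$ be the smallest total weight among paths from $j$ to $\mathcal{A}$ whose arcs avoid $i$ and use only the already-frozen contact and exposed times. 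Setting $x_j=(t-T_i-d_j)_+$ and $\mathcal{B}=\{j\in\mathcal{N}_i:T_i+d_j\le t\}$, the additive representation $\omega_{\xi}=\sum\omega^*_{v_k,v_{k-1}}+\sum\nu_{v_k}$ shows that the disease fails to reach $\mathcal{A}$ by time $t$ if and only if either the conditioning already forces this (the answer is then constant in $\mu_i$ and the $\omega_{ji}$) or the event $i^{\nrightarrow}_{x_1\ldots x_{|\mathcal{B}|}}$ occurs.

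In the non-trivial case I invoke the single-individual identities already derived. For hypotheses (a) and (b),
\[ P^*\bigl(i^{\nrightarrow}_{x_1\ldots x_{|\mathcal{B}|}}\bigr)=E[\phi(\mu_i)], \]
with $\phi$ non-negative and non-increasing on $[0,\infty)$, and under (b) also convex (a product of non-negative, non-increasing, convex functions is non-increasing and convex). Under (a), $\mu_i^{(1)}\le_{\text{st}}\mu_i^{(2)}$ combined with $\phi$ non-increasing yields $E[\phi(\mu_i^{(1)})]\ge E[\phi(\mu_i^{(2)})]$; under (b), $\mu_i^{(1)}\ge_{\text{dcx}}\mu_i^{(2)}$ combined with $\phi$ non-increasing and convex yields the same inequality. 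For (c) and (d) I use instead
\[ P\bigl(i^{\nrightarrow}_{x_1\ldots x_{|\mathcal{B}|}}\bigr)=E[\theta(Z_i)], \]
with $\theta$ non-increasing and convex on $[0,1]$, and conclude analogously via $Z_i^{(1)}\le_{\text{st}}Z_i^{(2)}$ or $Z_i^{(1)}\ge_{\text{dcx}}Z_i^{(2)}$. In every sub-case the conditional probability of non-transmission is weakly larger in epidemic~1; integrating back over the conditioning and then over $z$ in \eqref{preproof} yields $P(\mathcal{A},t)$ weakly larger in epidemic~2.

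The main obstacle is the deterministic path-decomposition in the second paragraph: showing that once all variables outside $\{\mu_i,\omega_{ji}(j\in\mathcal{N}_i)\}$ are frozen, the event ``no path in $\Xi_I^{\mathcal{A}}$ has weight $\le t$'' collapses either to a constant (in $\mu_i$ and the $\omega_{ji}$) or to a single event of the form $i^{\nrightarrow}_{x_1\ldots x_{|\mathcal{B}|}}$. The crux is that $\mu_i$ enters the dynamics only through the outgoing weights $\omega_{ji}^*$ (since $\omega_{ji}^*=\infty$ whenever $\omega_{ji}>\mu_i$): any simple path whose arcs avoid $\{(i,j):j\in\mathcal{N}_i\}$ has weight already determined by the conditioning, while any simple path using such an arc uses exactly one of them at the step where $i$ first passes on the infection, so that its weight equals $T_i+\omega_{ji}^*+d'_j$ for some $d'_j\ge d_j$. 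A case analysis on whether any fully-frozen path has weight $\le t$, and on the minimum achievable weight among paths through $i$, closes this step. A secondary, milder subtlety is that the convexity of $\phi$ in case (b) only holds after conditioning on the $W$-variables, which is why these must be rolled into the conditioning before the order-theoretic inequality is applied.
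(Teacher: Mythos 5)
Your overall strategy coincides with the paper's: reduce to $\mathcal{D}=\{i\}$, use the path representation \eqref{preproof}, freeze everything except $\mu_i$ and the $\omega_{ji}$ ($j\in\mathcal{N}_i$), write the conditional non-transmission probability as $E[\phi(\mu_i)]$ or $E[\theta(Z_i)]$ with $\phi$ non-increasing (and convex under (b)) and $\theta$ non-increasing and convex, and invoke the definitions of $\le_{\text{st}}$ and $\ge_{\text{dcx}}$. The one place you genuinely diverge is the step you yourself flag as the main obstacle, and it is the step where your version has a gap. You decompose the weight of a path through $i$ as $T_i+\omega^*_{ji}+d_j$, where $T_i$ and $d_j$ are \emph{separate} infima (entry time of $i$, and onward distance from $j$ to $\mathcal{A}$ avoiding $i$). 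Because of the simple-path constraint these two infima need not be jointly achievable, so $T_i+d_j$ can be strictly smaller than the infimum of $\omega_\xi-\omega^*_{ji}$ over actual paths $\xi$ containing the arc $(i,j)$. Your sketch ("its weight equals $T_i+\omega^*_{ji}+d_j'$ for some $d_j'\ge d_j$") only proves that every path through $(i,j)$ has weight at least $T_i+\omega^*_{ji}+d_j$, i.e.\ one inclusion between your event and the true non-reaching event. The missing direction is: if $T_i+\omega^*_{ji}+d_j\le t$ for some $j$ then the disease really does reach $\mathcal{A}$ by $t$. This is true, but it needs a walk-to-simple-path reduction (concatenate the minimal path to $i$, the arc $(i,j)$, and the minimal path from $j$ to $\mathcal{A}$, then excise loops using non-negativity of the weights, checking that the excised path still has all intermediates initially susceptible); the resulting simple path may not pass through $i$ at all, which is why the statement is not immediate from your per-arc decomposition.

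The paper sidesteps this entirely by defining, for each arc $(i,j)$, the quantity $\Omega_{ji}=\inf\{\omega_\xi-\omega^*_{ji}\}$ with the infimum taken over \emph{whole} paths $\xi$ containing $(i,j)$ (and $\Omega_i$ for paths avoiding all out-arcs of $i$). These are functions of the frozen variables only, and the identity $\{\inf_\xi\omega_\xi>t\}=\{\Omega_i>t\}\cap\bigcap_j\{\Omega_{ji}+\omega^*_{ji}>t\}$ is exact by construction, with no compatibility issue; the rest of the argument then proceeds exactly as you propose, with $x_j$ playing the role of $\Omega_{ji}$ rather than $T_i+d_j$. If you adopt that definition your gap disappears; otherwise you must supply the walk-reduction lemma. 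Two minor omissions: the set of relevant paths can be countably infinite, which the paper handles by continuity of probability measures along finite truncations, and $\mathcal{D}$ may be countably infinite, so "a finite chain of intermediate epidemics" also needs a limiting argument. Everything else (the convexity of products of non-negative non-increasing convex functions, the need to condition on the $W$-variables before asserting convexity of the survival functions in case (b), and the use of $\ge_{\text{dcx}}$ rather than $\ge_{\text{cx}}$) is handled correctly and matches the paper.
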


\begin{proof}
	We prove the theorem by showing that the probability in the integrand of \eqref{preproof} is greater, or the same, in epidemic 1 than in epidemic 2 for all initial states $z \in \{S,I,R\}^{\mathcal{V}}$ (the distribution of the initial state is the same for epidemics 1 and 2).
Let $\Xi$ be an arbitrary set of paths and assume, for now, that $\Xi$ is finite and that $\mathcal{D}=\{i\}$ where $i \in \mathcal{V}$, and let us label all of $i$'s neighbours via an arbitrary bijection to $\{1,2, \ldots , n_i \}$ where $n_i=|\mathcal{N}_i|$. For any arc $(i,j)$, where $j \in \mathcal{N}_i$, let $\Omega_{ji}={\rm inf}\{\omega_{\xi}-\omega_{ji}^* \}$ where the infimum is over all paths $\xi \in \Xi$ which contain $(i,j)$. 
%be the minimum of the weightings of all paths in $\Xi$ which contain $(i,j)$, not including the contribution from $\omega_{ji}^*$. 
If there are no paths in $\Xi$ which contain $(i,j)$ then we let $\Omega_{ji}= \infty$. Let $\Omega_{i}={\rm inf}\{\omega_{\xi} \}$ where the infimum is over all paths $\xi \in \Xi$ which do not contain an arc $(i,j)$ where $j \in \mathcal{N}_i$. If all paths in $\Xi$ contain an arc $(i,j)$, where $j \in \mathcal{N}_i$, then we let $\Omega_i= \infty$.
Thus, we may write
\begin{eqnarray} \nonumber & P({\rm inf}_{\xi \in \Xi} \{ \omega_{\xi} \}>t) & \\ 
=&P( \Omega_{ji} + \omega_{ji}^* > t \text{ for all } j \in  \mathcal{N}_i , \, \Omega_i>t). &   \label{prob1}
\end{eqnarray}
By assumption, the two sets of random variables $\cup_{j \in \mathcal{N}_i}\{ \omega_{ji}^* \}$ and $ \cup_{j \in \mathcal{N}_i} \{ \Omega_{ji} \}  \cup \{W_i\} \cup \{\Omega_i \}$ are conditionally independent of each other given $W_i$, where $W_i=(W^i_{\rm out},W^1_{\rm in}, \ldots ,W^{n_i}_{\rm in})$. We can now express (\ref{prob1}) as
\begin{eqnarray} \nonumber 
&&\int_{\mathcal{S}_{t}}    P(\omega_{1i}^*> (t-x_1) ,  \\ \nonumber
&& \qquad \quad \ldots ,  \omega_{n_i i}^* >(t-x_{n_i}) \mid W_i=x_{n_i+1} ) \,\, {\rm d} \lambda( \vec{x}), \\  \label{prob2} 
\end{eqnarray}
where $\mathcal{S}_{t}=[0, \infty]^{n_i} \times \mathcal{W}_i\times (t,\infty]$ and $\mathcal{W}_i$ is the range of $W_i$. The measure $\lambda(\vec{x})$ is the joint distribution of $( \Omega_{1i}, \ldots , \Omega_{n_i i},W_i, \Omega_i)$, where $x_1, \ldots, x_{n_i}$ correspond to $\Omega_{1i}, \ldots , \Omega_{n_i i}$ respectively, $x_{n_i+1}$ corresponds to $W_i$, and $x_{n_i+2}$ corresponds to $\Omega_i$. We can now re-express \eqref{prob2} as
\begin{equation} 
\int_{\mathcal{S}_{t}}   E[\phi(\mu_i)] \,\, {\rm d} \lambda( \vec{x}),  \label{prob3} 
\end{equation}
where, for $\tau \in [0,\infty)$,
\begin{equation} \nonumber
\phi(\tau)=\prod_{j \in \mathcal{N}_i} P(\omega_{ji}> {\rm min}(t-x_j, \tau ) \mid W_i=x_{n_i+1})
\end{equation}
is convex if (b) holds and is non-increasing in any case. Thus if condition (a) or (b) holds then \eqref{prob3} ($=$\eqref{prob1}) is greater in epidemic 1, or the same, than in epidemic 2 by the definitions of the decreasing convex and usual stochastic orders. 

If condition (c) or (d) holds, we can re-express \eqref{prob2} as
\begin{equation} 
\int_{\mathcal{S}_{t}}   E[\theta(Z_i)] \,\, {\rm d} \lambda( \vec{x}),  \label{prob4} 
\end{equation}
where, for $\tau \in [0,1]$,
\begin{equation} \nonumber
\theta(\tau)=\prod_{j \in \mathcal{N}_i} {\rm max}(1-\tau,P(\omega_{ji}>t-x_j)) 
\end{equation}
is convex and non-increasing. Thus if condition (c) or (d) holds then \eqref{prob4}($=$\eqref{prob1}) is greater in epidemic 1, or the same, than in epidemic 2 by the definitions of the decreasing convex and usual stochastic orders. Thus, the theorem is true for this special case.

In the case where $\Xi$ is infinite we may use the continuity of probability measures to write
\begin{eqnarray} \nonumber &P({\rm inf}_{\xi \in \Xi} \{ \omega_{\xi} \}>t)& \\ \nonumber
= & \lim_{r \to \infty}P( \omega_{\xi}>t \text{ for all }\xi \in \Xi_r),&  \end{eqnarray}
where $\Xi_r$ is the finite set consisting of the first $r \in \mathbb{N}$ paths in $\Xi$ ($\Xi$ is countable since $\mathcal{V}$ is countable by assumption and the set of finite subsets of a countable set is countable). Thus when $\Xi$ is infinite the theorem still holds for this special case.

Since $\mathcal{D}$ is finite or countably infinite and the infectious period distributions are arbitrary we may repeatedly apply the theorem in the special case where $\mathcal{D}=\{i\}$, which we have already proved, to prove the theorem in general. 

\end{proof}

		\begin{theorem}
			\label{theorem2}
			Consider the message passing epidemic model \eqref{hom3}-\eqref{Fgraph} and assume that $\text{sup}_{i \in \mathcal{V}, j \in \mathcal{N}_i}\text{sup}_{\tau \ge 0} f_{\omega_{ij}}(\tau)<\infty$.
			
			(a) If the infectious periods of $\mathcal{B} \subset \mathcal{V}$ are increased in the usual stochastic order then $S^{(i)}_{\text{mes}}(t)$ is decreased or remains the same, for all $i \in \mathcal{V}$ and all $t > 0$.
			 
			   (b) If the infectious periods of $\mathcal{B} \subset \mathcal{V}$ are decreased in convex order, and $f_{\omega_{ji}}(\tau)$ is non-increasing for all $i \in \mathcal{B}, j \in \mathcal{N}_i$, then $S^{(i)}_{\text{mes}}(t)$ is decreased or remains the same, for all $i \in \mathcal{V}$ and all $t > 0$.
			
			(c) Assume that for each $i \in \mathcal{V}$ the $\omega_{ji}(j \in \mathcal{N}_i)$ are i.i.d. (let $\omega_{.i}$ denote the random sufficient contact time with this distribution) and define the transmissibility variable $Z_i=F_{\omega_{.i}}(\mu_i)$. If the infectious periods of $\mathcal{B} \subset \mathcal{V}$ are altered such that the $Z_i (i \in \mathcal{B})$ are increased in the usual stochastic order, or decreased in convex order, then $S^{(i)}_{\text{mes}}(t)$ is decreased or remains the same, for all $i \in \mathcal{V}$ and all $t > 0$.
			
			(d) For all $i \in \mathcal{V}, j \in \mathcal{N}_i$, let $F_{-}^{i \leftarrow j}(t)$ be given by the equation for $F^{i \leftarrow j}(t)$ (\eqref{Fgraph}) when it is modified such that $F_{\mu_j}(\tau)$ is replaced by $H(s_{ij}-\tau)$, where $H$ is the Heaviside step function and $s_{ij}$ satisfies \eqref{bound1}. Then $z_i \prod_{j \in \mathcal{N}_i} F_{-}^{i \leftarrow j}(t)$ is a lower bound on $S_{\text{mes}}^{(i)}(t)$ for all $i \in \mathcal{V}, t>0$. Additionally, for all $i \in \mathcal{V}, j \in \mathcal{N}_i$, let $F_{+}^{i \leftarrow j}(t)$ be given by the equation for $F^{i \leftarrow j}(t)$ (\eqref{Fgraph}) when it is modified such that $F_{\mu_j}(\tau)$ is replaced by a constant $c_{ij}$ which satisfies \eqref{bound2}. Then $z_i \prod_{j \in \mathcal{N}_i} F_{+}^{i \leftarrow j}(t)$ is an upper bound on $S_{\text{mes}}^{(i)}(t)$ for all $i \in \mathcal{V}, t>0$.  

		\end{theorem}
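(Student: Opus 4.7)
The plan is to establish (a)--(c) by a simultaneous Picard-iteration induction on the message passing system, and then to derive (d) as an edge-wise application of (c). Under the bounded-density hypothesis, Theorem 1 of \cite{Wilkinson} ensures that the iteration
\begin{equation*}
F^{i \leftarrow j}_{(n+1)}(t) = 1 - \int_0^t f_{\omega_{ij}}(\tau)\bar{F}_{\mu_j}(\tau)\left[1 - y_j - z_j \prod_{k \in \mathcal{N}_j \setminus i} F^{j \leftarrow k}_{(n)}(t-\tau)\right] d\tau,
\end{equation*}
starting from $F^{i \leftarrow j}_{(0)} \equiv 1$, converges pointwise and monotonically from above to the unique solution of \eqref{Fgraph}. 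A parallel induction shows each $F^{i \leftarrow j}_{(n)}(\cdot)$ is non-increasing in $t$ (using $y_j + z_j \le 1$), so the function $g^{(n),m}_j(u) := 1 - y_j - z_j \prod_{k \in \mathcal{N}_j \setminus i} F^{j \leftarrow k}_{(n),m}(u)$ is non-negative and non-decreasing in $u$ for each parameterisation $m \in \{1,2\}$.

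For (a) and (b), apply Fubini (using independence of $\omega_{ij}$ and $\mu_j$) to rewrite the convolution as $E[h^{(n),m}_{ij,t}(\mu^{(m)}_j)]$, where $h^{(n),m}_{ij,t}(s) := \int_0^{s \wedge t} f_{\omega_{ij}}(\tau)\, g^{(n),m}_j(t-\tau)\, d\tau$. This function is always non-decreasing in $s$, and is additionally concave when $f_{\omega_{ij}}$ is non-increasing, because its derivative $f_{\omega_{ij}}(s)\, g^{(n),m}_j(t-s)$ on $[0,t]$ is then the product of two non-negative, non-increasing functions of $s$, dropping (non-positively) to $0$ at $s=t$. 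Now induct on $n$ with the hypothesis $F^{i \leftarrow j}_{(n),1} \ge F^{i \leftarrow j}_{(n),2}$ pointwise for all $(i,j)$: this yields $g^{(n),1}_j \le g^{(n),2}_j$ pointwise, hence $h^{(n),1}_{ij,t} \le h^{(n),2}_{ij,t}$ pointwise, so
\begin{equation*}
E[h^{(n),1}_{ij,t}(\mu^{(1)}_j)] \le E[h^{(n),2}_{ij,t}(\mu^{(1)}_j)] \le E[h^{(n),2}_{ij,t}(\mu^{(2)}_j)],
\end{equation*}
where the second inequality uses either the stochastic-order hypothesis with monotonicity of $h^{(n),2}_{ij,t}$ (part (a)), or the convex-order hypothesis with concavity of $h^{(n),2}_{ij,t}$ (part (b)). This gives $F^{i \leftarrow j}_{(n+1),1}(t) \ge F^{i \leftarrow j}_{(n+1),2}(t)$; the base case $n=0$ is trivial, and passing to the limit plus \eqref{hom3} delivers (a) and (b).

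For (c), under i.i.d.\ $\omega_{.j}$ the change of variables $u = F_{\omega_{.j}}(\tau)$ (handling non-invertible CDFs with a generalised inverse) recasts the integral as $E[\tilde{H}^{(n),m}_{ij,t}(Z^{(m)}_j)]$ with
\begin{equation*}
\tilde{H}^{(n),m}_{ij,t}(z) := \int_0^{z \wedge F_{\omega_{.j}}(t)} g^{(n),m}_j\!\left(t - F_{\omega_{.j}}^{-1}(u)\right) du,
\end{equation*}
which is non-decreasing and concave in $z$ with \emph{no} assumption on $f_\omega$, because $F_{\omega_{.j}}^{-1}$ is non-decreasing and $g^{(n),m}_j$ is non-decreasing. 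The same two-step induction applied to $\tilde{H}$ and $Z$ gives (c) for both orderings. For (d), substituting $\bar{F}_{\mu_j}(\tau) = H(s_{ij}-\tau)$ in the $F^{i \leftarrow j}$ equation corresponds to taking $\mu_j$ deterministic at $s_{ij}$ in that single equation, with induced transmissibility a point mass at $F_{\omega_{ij}}(s_{ij}) = E[Z_j]$ by \eqref{bound1}; hence $Z_j \ge_{\text{cx}} Z^{\text{new}}_j$ by the graphical sufficient condition, and the edge-wise variant of the (c) proof gives $F^{i \leftarrow j}_{-}(t) \le F^{i \leftarrow j}(t)$ and therefore the lower bound on $S^{(i)}_{\text{mes}}$. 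Substituting $\bar{F}_{\mu_j}(\tau) = c_{ij}$ corresponds to $\mu_j \in \{0,\infty\}$ with $P(\mu_j=\infty) = c_{ij}$, giving a Bernoulli $Z^{\text{new}}_j$ with mean $c_{ij} = E[Z_j]$ by \eqref{bound2}; this is maximal in convex order among $[0,1]$-valued variables with that mean, so $Z^{\text{new}}_j \ge_{\text{cx}} Z_j$ and the reversed inequality yields the upper bound.

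The main obstacle is the self-referential structure of the fixed point: the same $F^{j \leftarrow k}$ whose monotonicity we are trying to prove sits inside the integrand via $g^{(n)}_j$, and the infectious period $\mu_j$ is entangled with this time- and parameter-dependent integrand. The Picard iteration decouples these at each step, and the key trick is splitting the comparison into (i) a pointwise domination of the $g$'s coming from the inductive hypothesis, followed by (ii) a stochastic- or convex-order comparison of $\mu_j$ (or $Z_j$) against the \emph{larger} integrand function, so that both effects push in the same direction. The only secondary technicality is handling non-invertible $F_{\omega_{.j}}$ in (c), which is routine via generalised inverses.
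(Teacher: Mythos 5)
Your proof is correct, and its overall architecture (Picard iteration from $F^{i\leftarrow j}_{(0)}\equiv 1$, induction over iterates, reduction of each comparison to an expectation of a monotone or concave function of $\mu_j$ or $Z_j$, and the graphical sufficient condition for part (d)) coincides with the paper's. The genuine difference is in how the induction step is decomposed. The paper writes each iterate as $1-P(X_{ij}+Y_{ij(m-1)}\le t')$ for independent non-negative random variables $X_{ij}$ (with density $f_{\omega_{ij}}\bar F_{\mu_j}$) and $Y_{ij(m-1)}$ (whose CDF encodes the nonlinear product term), and then uses the fact that the CDF of a sum of independent non-negative variables is monotone in each marginal CDF; this isolates the infectious-period dependence in $\bar F_{X_{ij}}(t')=E[\phi(\mu_j)]$ with the simple function $\phi(\tau)=P(\omega_{ij}>\min(\tau,t'))$, whose convexity under a non-increasing density is immediate and does not involve the iterates at all. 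You instead absorb the nonlinear term $g^{(n)}_j$ into the composite function $h^{(n)}_{ij,t}$, which forces you to prove the auxiliary lemma that each iterate is non-increasing in $t$ (so that $g^{(n)}_j$ is non-decreasing and $h$ is concave); your parallel induction for this is valid, but it is an extra moving part the paper's probabilistic decomposition gets for free, since $P(X+Y>t')$ is automatically non-increasing in $t'$. What your route buys is that you avoid invoking the convolution identity for sums of independent variables, and your change of variables $u=F_{\omega_{.j}}(\tau)$ in part (c) makes the concavity of $\tilde H$ transparent without any hypothesis on $f_\omega$, mirroring the paper's $\theta(\tau)=\max(1-\tau,P(\omega_{ij}>t'))$. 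Your treatment of (d) — deterministic $\mu$ at $s_{ij}$ versus the $\{0,\infty\}$-valued $\mu$ with $P(\mu=\infty)=c_{ij}$, compared through the convex order on the induced transmissibilities with fixed mean — is exactly the paper's argument; just note, as the paper does explicitly, that this requires promoting $\mu_j$ to edge-specific variables $\mu_{ij}$ since $s_{ij}$ and $c_{ij}$ vary with $i$, which your phrase ``edge-wise variant'' correctly anticipates.
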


\begin{proof}
	The message passing model \eqref{hom3}-\eqref{Fgraph} has a unique feasible solution if $\text{sup}_{i \in \mathcal{V}, j \in \mathcal{N}_i}\text{sup}_{\tau \ge 0} f_{\omega_{ij}}(\tau)<\infty$, by Theorem 1 in \cite{Wilkinson}, and we assume this to be the case. Let $t>0$. With reference to \eqref{Fgraph}, for $m \in \{1,2, \ldots\}$, $t' \in [0,t]$, $i \in \mathcal{V}, j \in \mathcal{N}_i$, let
		\begin{eqnarray}  \nonumber
		F^{i \leftarrow j}_{(m)}(t')& =& 1 - \int_0^{t'} f_{\omega_{ij}}(\tau)\bar{F}_{\mu_j}(\tau) \\ \nonumber
		&& \qquad \times  \left[ 1 - y_j -   z_j \prod_{k \in \mathcal{N}_j \setminus i} F_{(m-1)}^{j \leftarrow k}(t' -\tau)    \right] \mbox{d} \tau \\
		&=& 1- P(X_{ij}+ Y_{ij(m-1)} \le t'),
		\label{mesresult}
		\end{eqnarray}
	and let $F_{(0)}^{i \leftarrow j}(t')=1$ for all $i \in \mathcal{V}, j \in \mathcal{N}_i, t' \in [0,t]$. Here, $X_{ij}$ and $Y_{ij(m)}$ are independent non-negative random variables, for all $m \in \{0,1, \ldots\}, i \in \mathcal{V}, j \in \mathcal{N}_i$, and are defined such that the PDF for $X_{ij}$ satisfies
	\begin{equation} \nonumber
	f_{X_{ij}}(\tau)=f_{\omega_{ij}}(\tau)\bar{F}_{\mu_j}(\tau) \qquad (\tau \in [0,t]),
	\end{equation} 
	 and the CDF for $Y_{ij(m)}$ satisfies
	 \begin{equation}F_{Y_{ij(m)}}(\tau)=1 - y_j -   z_j \prod_{k \in \mathcal{N}_j \setminus i} F_{(m)}^{j \leftarrow k}(\tau) \qquad (\tau \in [0,t]).  
	 \nonumber
	 \end{equation}
	 The second equality in \eqref{mesresult} then follows from how the CDF for the sum of two non-negative independent random variables is formed from their two respective distributions. This is discussed further below (see \eqref{sum}).

	It is the case that $ 1 \ge F^{i \leftarrow j}_{(m)}(t') \ge F^{i \leftarrow j}_{(m+1)}(t') \ge 0 $ for all $i \in \mathcal{V}, j \in \mathcal{N}_i, m \in \{0,1,\ldots\}, t' \in [0,t]$, and so $\lim_{m \to \infty}F^{i \leftarrow j}_{(m)}(t')=F^{i \leftarrow j}(t')$, and recall that $S^{(i)}_{\text{mes}}(t)= z_i \prod_{j \in \mathcal{N}_i}F^{i \leftarrow j}(t)$. Thus, using \eqref{mesresult}, we may prove parts (a), (b) and (c) of the theorem by showing that their conditions lead to $F_{X_{ij}}(t')$ and $F_{Y_{ij(m)}}(t')$ increasing for all $i \in \mathcal{V}, j \in \mathcal{N}_i, m \in \{1,2, \ldots\}, t' \in [0,t]$, where these are the CDFs for the $X_{ij}$ and the $Y_{ij(m)}$. This is because the CDF for the sum of two independent non-negative random variables, $X_1$ and $X_2$, is given by
	\begin{eqnarray} \nonumber  F_{X_1+X_2}(\tau)&=& \int_0^{\tau}F_{X_1}(\tau-\tau') {\rm d} F_{X_2} (\tau') \\ 
	&=& \int_0^{\tau}F_{X_2}(\tau-\tau') {\rm d} F_{X_1} (\tau') , 
	\label{sum}
	\end{eqnarray}
	where $\tau > 0$. Thus $F_{X_1+X_2}(t)$ is increased if $F_{X_1}$ is increased on $[0,t]$ and, by induction, if $F_{X_2}$ is simultaneously increased on $[0,t]$.  
	
	If the CDFs for the $X_{ij}$ are increased on $[0,t]$, and if the CDFs for the $Y_{ij(m-1)}$ are increased on $[0,t]$, then $F^{i \leftarrow j}_{(m)}(t')$ is decreased for all $t' \in [0,t]$, whence the CDFs for the $Y_{ij(m)}$ are increased on $[0,t]$. Now note that if the CDFs for the $X_{ij}$ are increased on $[0,t]$ then $F^{i \leftarrow j}_{(1)}(t')$ is decreased for all $t' \in [0,t]$, whence the CDFs for the $Y_{ij(1)}$ are increased on $[0,t]$. 
	
	We may now prove parts (a), (b) and (c) of the theorem by showing that the conditions in each part lead to the CDFs for the $X_{ij}$ increasing on $[0,t]$, since then, by induction, the CDFs for the $Y_{ij(m)}$ are also increased on $[0,t]$. We do this by showing that the survival functions for the $X_{ij}$ are decreased on $[0,t]$ for all $j \in \mathcal{B}, i \in \mathcal{N}_j$ (if $j \notin \mathcal{B}$ then $X_{ij}$ is unaltered).  
	
	Firstly, for $j \in \mathcal{B}, i \in \mathcal{N}_j , t' \in [0, t]$, we may write
	\begin{eqnarray} \nonumber
	P(X_{ij}> t')&=& 1- \int_{0}^{t'} f_{\omega_{ij}}(\tau) \bar{F}_{\mu_j}(\tau) \mbox{d} \tau \\ \nonumber
	&=& P(\omega_{ij}> {\rm min}(\mu_j , t')) \\
	&=&E[\phi(\mu_j)],
	\label{condab}
	\end{eqnarray}
	where, for $\tau \in [0, \infty)$,
	$$\phi(\tau)=P(\omega_{ij}> {\rm min}(\tau,t')) $$
	is convex if the conditions in (b) are met and is non-increasing in any case. Thus, the conditions in (a) and (b) lead to the expectation in \eqref{condab} decreasing and hence to the survival functions for the $X_{ij}$ decreasing on $[0,t]$.
	
	Similarly, if the conditions for (c) are met then for $j \in \mathcal{B}, i \in \mathcal{N}_j , t' \in [0, t]$, we may write
	\begin{eqnarray}
	P(X_{ij}>t')&=& E[\theta(Z_j)],
	\label{condc}
	\end{eqnarray}
	where, for $\tau \in [0,1]$,
	$$\theta(\tau)= {\rm max}(1-\tau,P(\omega_{ij}>t')) $$
	is convex and non-increasing. Thus, the conditions in (c) lead to the expectation in \eqref{condc} decreasing and hence to the survival functions for the $X_{ij}$ decreasing on $[0,t]$.
	
	To prove part (d) of the theorem, let us replace $\mu_j$ in \eqref{mesresult} and in \eqref{Fgraph} by $\mu_{ij}$, and redefine the PDF for $X_{ij}$ to be equal to $f_{\omega_{ij}}(\tau)\bar{F}_{\mu_{ij}}(\tau)$ on $[0,t]$. Let us define $Z_{ij}=F_{\omega_{ij}}(\mu_{ij})$ such that \eqref{condc} still holds if $Z_j$ is replaced by $Z_{ij}$. Initially, assume that the distribution of $\mu_{ij}$ is the same as for $\mu_{j}$ so that the model gives exactly the same output. Now, altering the distributions of the $\mu_{ij}$ such that the $Z_{ij}$ are decreased in convex order then $S^{(i)}_{\text{mes}}(t)$ is decreased or remains the same for all $i \in \mathcal{V}$ and all $t > 0$ by the same argument as for part (c) of the theorem. Thus, using the graphical sufficient condition for the convex order, we may achieve a lower bound on $S^{(i)}_{\text{mes}}(t)$ for all $i \in \mathcal{V}, t>0$, by setting $\mu_{ij}$ to be non-random while holding $E[Z_{ij}](=P(\omega_{ji} \le \mu_{ij})=P(\omega_{ij} \le \mu_j))$ constant for all $i \in \mathcal{V}, j \in \mathcal{N}_i$, and an upper bound by setting $\mu_{ij}$ such that it may be only zero or infinite while holding $E[Z_{ij}]$ constant for all $i \in \mathcal{V}, j \in \mathcal{N}_i$. In the former case, $F_{\mu_j}(\tau)$ in \eqref{Fgraph} becomes $H(s_{ij}-\tau)$ where $H$ is the Heaviside step function and $s_{ij}$ satisfies \eqref{bound1}, while in the latter case it becomes a constant $c_{ij}$ which is defined to satisfy \eqref{bound2}.

\end{proof}

		\begin{theorem}
			\label{theorem3}
			Let $(S_1(t),I_1(t),R_1(t))$ be given by the Kermack-McKendrick \cite{Kerm} model \eqref{hom11}-\eqref{hom2} but with $\mu$ (the infectious period) replaced by $\mu_1$. Similarly, let $(S_2(t),I_2(t),R_2(t))$ be given by \eqref{hom11}-\eqref{hom2} but with $\mu$ replaced by $\mu_2$. Define $Z^*_1$ to be $\int_0^{\mu_1} h(\tau) {\rm d} \tau$ and $Z^*_2$ to be $\int_0^{\mu_2} h(\tau) {\rm d} \tau$, where $h(\tau)$ is given in \eqref{hom11}-\eqref{hom2}. Let  
			$$(S_1(0),I_1(0),R_1(0))=(S_2(0),I_2(0),R_2(0))$$
			and assume at least one of the following conditions:
			 \begin{enumerate}
				\item[(i)]  $\mu_1 \le_{\text{st}} \mu_2$ and the infectious period cannot be infinite;
				\item[(ii)]  $ \mu_1 \ge_{\text{cx}} \mu_2 $ and $h(\tau)$ is non-increasing on $[0, \infty)$ and the infectious period cannot be infinite;
				\item[(iii)] $Z^*_1 \le_{\text{st}} Z^*_2$;
				\item[(iv)] $ Z^*_1 \ge_{\text{cx}} Z^*_2 $. 
			\newline
			\end{enumerate}  

		Then for all $t > 0$, we have $S_1(t) \ge S_2(t)$.
		\end{theorem}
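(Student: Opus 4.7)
My plan is to rewrite \eqref{hom11} as a scalar fixed-point equation in exponential form and then compare the two solutions via a monotone Picard iteration against which the convex/stochastic-order machinery developed in Sections~\ref{mean}-\ref{trans} applies directly.

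\textbf{Integral form and iteration.} Dividing \eqref{hom11} by $S(t)$, integrating from $0$ to $t$, swapping the order of integration in the resulting double integral, and using $S(0)+I(0)+R(0)=1$, I obtain the equivalent scalar equation
\[
 S(t)=S(0)\exp(-A_\mu(t,S)),\qquad A_\mu(t,S)=\int_0^t h(\tau)\bar F_\mu(\tau)\bigl[1-R(0)-S(t-\tau)\bigr]\,d\tau.
\]
Let $\mathcal T[\mu](S)(t)=S(0)\exp(-A_\mu(t,S))$ and iterate $S^{(0)}\equiv S(0)$, $S^{(n+1)}=\mathcal T[\mu](S^{(n)})$. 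I would check (by induction in $n$, noting that if $S^{(n)}$ is non-increasing in $t$ then $\partial_t A_\mu(t,S^{(n)})\ge 0$) that each $S^{(n)}$ is non-increasing in $t$; that $\mathcal T[\mu]$ is order-preserving in its function argument (since $S\le S'$ increases the integrand $1-R(0)-S$); and that $S^{(n)}\downarrow S$ pointwise, using unique non-negative solvability of \eqref{hom11}-\eqref{hom2} under continuous differentiability of $h$.

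\textbf{The $\mu$-comparison.} This is the heart of the argument. For any non-increasing $S$ with $0\le S\le S(0)$, Fubini gives
\[
 A_\mu(t,S)=E\bigl[g(\mu)\bigr],\qquad g(s)=\int_0^{\min(t,s)}h(\tau)\bigl[1-R(0)-S(t-\tau)\bigr]\,d\tau.
\]
The function $g$ is non-negative and non-decreasing; and when $h$ is non-increasing, $g'(s)=h(s)[1-R(0)-S(t-s)]$ is a product of two non-negative non-increasing factors on $[0,t]$ and vanishes on $(t,\infty)$, so $g$ is concave on $[0,\infty)$. Hence condition (i) (usual stochastic order) gives $E[g(\mu_1)]\le E[g(\mu_2)]$ because $g$ is non-decreasing, and condition (ii) gives the same via convex order applied to the convex function $-g$; in either case $\mathcal T[\mu_1](S)\ge\mathcal T[\mu_2](S)$. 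For (iii) and (iv) I change variables $u=\int_0^\tau h(\sigma)\,d\sigma$ so that $A_\mu(t,S)=E[\tilde g(Z^*)]$, with $\tilde g$ again non-decreasing and concave on its relevant range using only $S$ non-increasing (no monotonicity of $h$ is needed), yielding the analogous conclusion.

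\textbf{Induction.} Chaining the $\mu$-comparison with order-preservation of $\mathcal T[\mu_2]$: if $S_1^{(n)}\ge S_2^{(n)}$ then
\[
 S_1^{(n+1)}=\mathcal T[\mu_1](S_1^{(n)})\ge\mathcal T[\mu_2](S_1^{(n)})\ge\mathcal T[\mu_2](S_2^{(n)})=S_2^{(n+1)}.
\]
Starting from $S_1^{(0)}=S_2^{(0)}$ and passing to the limit $n\to\infty$ yields $S_1\ge S_2$. The main obstacle I anticipate is not the ordering estimate itself (which follows the template of the arguments in Sections~\ref{mean}-\ref{trans}) but the routine bookkeeping establishing that the Picard sequence actually converges to the unique solution of \eqref{hom11}-\eqref{hom2}; this reduces to local Lipschitz continuity of $\mathcal T[\mu]$ in $S$ on bounded time intervals combined with a Gronwall inequality.
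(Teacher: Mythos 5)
Your argument is correct, but it takes a genuinely different route from the paper. The paper does not touch the Kermack--McKendrick equations analytically at all: it realises the model as the $n\to\infty$ limit of the stochastic network model on an $n$-regular tree with contact-time density $f_{\omega(n)}(\tau)=\frac{h(\tau)}{n}\exp\bigl(-\frac{1}{n}\int_0^\tau h(\tau')\,{\rm d}\tau'\bigr)$, checks that hypotheses (i)--(iv) translate into hypotheses (a)--(d) of Theorem~\ref{theorem1} (the key observation being that $Z(n)=1-{\rm e}^{-Z^*/n}$ is an increasing concave function of $Z^*$, so stochastic and convex orderings of $Z^*$ pass to the transmissibility), and then invokes an external convergence theorem (Theorem 6 of \cite{Wilkinson}) to transfer the conclusion of Theorem~\ref{theorem1} to $S(t)$ in the limit. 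You instead work directly on the scalar equation: your exponential integral form $S(t)=S(0)\exp(-A_\mu(t,S))$ is the correct integrated version of \eqref{hom11}, your Fubini rewriting $A_\mu(t,S)=E[g(\mu)]$ with $g$ non-decreasing, and concave when $h$ is non-increasing and $S$ is non-increasing (and the reparametrisation $E[\tilde g(Z^*)]$ for (iii)--(iv), where concavity needs no monotonicity of $h$), is exactly the right transplant of the convex-order mechanism of Sections~\ref{mean}--\ref{trans}; and the two-step chaining of the $\mu$-comparison with order-preservation of $\mathcal{T}[\mu_2]$ closes the induction. This is in fact closer in spirit to the paper's own proof of Theorem~\ref{theorem2} (an iterated integral-equation scheme) than to its proof of Theorem~\ref{theorem3}. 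What your route buys is self-containedness — no appeal to the Bethe-lattice limit or to the cited convergence theorem — at the cost of having to establish the well-posedness bookkeeping you flag yourself: that \eqref{hom11} is equivalent to the fixed-point form (which needs $S>0$, trivial unless $S(0)=0$), that the monotone-decreasing Picard sequence converges to a fixed point, and that the fixed point is unique (Gronwall, using local boundedness of $h$ from its assumed continuous differentiability). Those steps are genuinely routine, so I see no gap; note only that the finiteness assumptions on $\mu$ in (i)--(ii) are still needed in your version so that the stochastic and convex orders are applied to real-valued, integrable random variables.
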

		
		\begin{proof}
				Following section 3 of \cite{Wilkinson}, let us consider the special case of the stochastic model where the contact network/graph $G$ is an infinite $n$-regular tree (also known as a Bethe lattice); $\nu_i=0$ for all $i \in \mathcal{V}$ (we remove the exposed state); $\mu_i\stackrel{d}{=}\mu_j$ for all $i,j \in \mathcal{V}$ (let $\mu$ denote the random infectious period with this distribution); the $W$ variables are independent from the $\omega$ variables and so may be discarded; $\omega_{ij}\stackrel{d}{=}\omega_{kl}$ for all $j,l \in \mathcal{V}, i \in \mathcal{N}_j, k \in \mathcal{N}_l$ (let $\omega$ denote the random sufficient contact time with this distribution); the states of individuals at $t=0$ are i.i.d. random variables.

		Now consider a sequence of such stochastic models indexed by $n=2,3,\ldots$ (where $n$ is also the regular degree of the contact network). We let the density function $f_\omega$, for the sufficient contact time $\omega$, depend on $n$ as follows
		$$f_{\omega(n)}(\tau)= \frac{h(\tau)}{n} \text{exp} \Big(- \frac{1}{n} \int_0^{\tau} h(\tau') \text{d} \tau' \Big)  \qquad (\tau \ge 0),$$
		where $h(\tau)$ is taken from the Kermack-McKendrick model \eqref{hom11}-\eqref{hom2}. Note that if $h(\tau)$ is non-increasing then the density function $f_{\omega(n)}(\tau)$ is non-increasing and the survival function $\bar{F}_{\omega(n)}(\tau)$ is convex, for all $n$. Thus, if condition (i) or (ii) holds, then condition (a) or (b) holds for Theorem~\ref{theorem1} and we have that $P(\mathcal{A},t)$ is greater, or the same, if $\mu \stackrel{d}{=} \mu_2$ than if $\mu \stackrel{d}{=} \mu_1$, for all $n$.

		Note that the transmissibility random variable $Z(= \int_0^{\mu} f_{\omega(n)}(\tau) \mbox{d} \tau)$ must now depend on $n$ and, letting $Z^*=\int_0^{\mu} h(\tau) \mbox{d} \tau$, we have $Z(n)=1- {\rm e}^{-Z^*/n}$ (to check this, express both sides as a function of $\mu$ and note that both sides are equal when $\mu=0$ and both sides have the same derivative with respect to $\mu$). Therefore if $Z^*_1 \le_{{\rm st}} Z^*_2$ or $Z^*_1 \ge_{{\rm cx}} Z^*_2$ then $Z_1(n) \le_{{\rm st }} Z_2(n)$ or $Z_1(n) \ge_{{\rm dcx }} Z_2(n)$, where $Z_1(n)=\int_0^{\mu_1} f_{\omega(n)}(\tau) \mbox{d} \tau$ and $Z_2(n)=\int_0^{\mu_2} f_{\omega(n)}(\tau) \mbox{d} \tau$. This follows since any decreasing function of $Z(n)$ can be expressed as a decreasing function of $Z^*$ (since $Z(n)$ is an increasing function of $Z^*$), and any decreasing convex function of $Z(n)$ can be expressed as a convex function of $Z^*$ (since $Z(n)$ is a concave function of $Z^*$). Thus, if condition (iii) or (iv) holds, then condition (c) or (d) holds for Theorem~\ref{theorem1} and we have that $P(\mathcal{A},t)$ is greater, or the same, if $\mu \stackrel{d}{=} \mu_2$ than if $\mu \stackrel{d}{=} \mu_1$, for all $n$.

		The probability that individual $i \in \mathcal{V}$ is susceptible at time $t$, which by symmetry is the same for all $i \in \mathcal{V}$, is equal to $[1-P(\{i\},t)]$ minus the probability that the individual is initially recovered/vaccinated.
		Thus, if at least one of the conditions holds then, by Theorem~\ref{theorem1}, the probability that an arbitrary individual is susceptible at time $t \ge 0$ is greater (or the same) if $\mu \stackrel{d}{=} \mu_1$ than if $\mu \stackrel{d}{=} \mu_2$, for every model in the sequence, i.e. for all $n$. 
		
		Now, using Theorem 6 in \cite{Wilkinson}, which tells us that as $n \to \infty$ the probability that an arbitrary individual is susceptible at time $t$ converges to $S(t)$ given by the Kermack-McKendrick model \eqref{hom11}-\eqref{hom2} (since here the message passing system is exact \cite{Wilkinson} and $nf_{\omega(n)}(\tau) \to h(\tau)$, satisfying condition (i) of that theorem), we have $S_1(t) \ge S_2(t)$ for all $t > 0$.
		\end{proof}

\end{document}